\begin{document}

\newcommand{\X}{{\mathcal{X}}}
\newcommand{\cU}{{\mathcal{U}}}
\newcommand{\cI}{{\mathcal{I}}}
\newcommand{\cC}{{\mathcal{C}}}
\newcommand{\rev}[1]{\textcolor{blue}{#1}} 
\newcommand{\R}{\mathbb{R}}
\newcommand{\N}{\mathbb{N}}

\newcommand{\probl}[1]{\textsc{#1}}
\newcommand{\Pint}{\operatorname{int}}
\newcommand{\Pends}{\operatorname{ends}}
\newcommand{\opt}{\text{OPT}}

\title{On the Recoverable Traveling Salesman Problem}

\author{Marc Goerigk\inst{1}\orcidID{0000-0002-2516-0129} \and
Stefan Lendl\inst{2}\orcidID{0000-0002-5660-5397} \and
Lasse Wulf\inst{3}\orcidID{0000-0001-7139-4092}}

\institute{Network and Data Science Management, University of Siegen, Unteres Schlo{\ss}~3, 57072 Siegen, Germany 
\email{marc.goerigk@uni-siegen.de} \and
Institute of Operations and Information Systems, University of Graz, Universit\"atsstra{\ss}e~15, 8010 Graz, Austria
\email{stefan.lendl@uni-graz.at} \and
Institute of Discrete Mathematics, Graz University of Technology, Steyrergasse~30/II, 8010 Graz, Austria \email{wulf@math.tugraz.at}}

\maketitle

\begin{abstract}
In this paper we consider the \probl{Recoverable Traveling Salesman Problem} (TSP). Here the task is to find two tours simultaneously, 
such that the intersection between the tours is at least a given minimum size, 
while the sum of travel distances with respect to two different distance metrics is minimized. 
Building upon the classic double-tree method, we derive a 4-approximation algorithm for the \probl{Recoverable TSP}. We also show that if the required size of the intersection between the tours is constant, a 2-approximation guarantee can be achieved, even if more than two tours need to be constructed. We discuss consequences for approximability results in the more general area of recoverable robust optimization.
\end{abstract}

\noindent\textbf{Keywords:} recoverable robustness; intersection constraints; traveling salesman problem; approximation


\section{Introduction}

Uncertainty and incomplete problem knowledge can have significant impact on the quality of decision we make. Several paradigms have been developed to include data uncertainty in the decision making process, including recoverable robustness. In this setting, we construct a first solution while data uncertainty is still present, and can later adjust this solution in a second stage, when full problem knowledge is available.

In principle, this approach can be applied to any (combinatorial) optimization problem, including the classic \probl{Traveling Salesman Problem (TSP)}. The \probl{TSP} is a well-studied fundamental problem in  combinatorial optimization, computer science and operations research. We denote by $(V, d)$ a \probl{TSP} instance, if  
$V$ is a set of vertices (cities) and $d \colon \binom{V}{2} \rightarrow \R_{\geq 0}$ a distance function.
We call $C = (v_0, v_1, \dots, v_n )$ a tour on $V$, if $|V|=n$, $v_0 = v_n$ and $V = \{v_0, \dots, v_{n-1} \}$.
We denote by $E(C) := \{ \{v_i, v_{i+1}\} \colon i=0,\dots,n-1 \}$ the edge set of $C$.
The \probl{TSP} asks for a tour $C$ of the vertices minimizing 
$d(C) = \sum_{e \in E(C)} d(e)$. The \probl{TSP} is known to be NP-hard and is 
one of the most-studied problems with respect to approximation algorithms. In its general 
form it is inapproximable, but if the distance function $d$ is a metric on $V$, then 
different constant factor approximation algorithms are known. A $2$-approximation can be achieved 
by the classic double-tree shortcutting algorithm~\cite{rosenkrantz1977analysis} and a 
$3/2$-approximation was achieved in the seminal work~\cite{christofides1976worst} 
introducing Christofides' algorithm. Up until 2020 this was the best-known approximation guarantee 
for the general metric \probl{TSP}, when a $3/2-\varepsilon$-approximation was achieved~\cite{karlin2021slightly}.

In this paper, we study a variant of \probl{TSP} denoted as 
\probl{Recoverable Traveling Salesman Problem (RecovTSP)}, which was introduced in 
the area of recoverable robust optimization~\cite{chassein2016recoverable}.
We denote by $(V,d_1,d_2,q)$ a \probl{RecovTSP} instance, if $V$ is a set of vertices (cities),
$d_1,d_2 \colon \binom{V}{2} \rightarrow \R_{\geq 0}$ are distance functions and $q \in \N$ is 
the intersection size parameter. The \probl{RecovTSP} asks for two tours $C_1, C_2$ on $V$ 
that minimize the objective $d_1(C_1) + d_2(C_2)$, subject to the constraint that $C_1$ and $C_2$ 
have at least $q$ edges in common, i.e. $|E(C_1) \cap E(C_2)| \geq q$. If both $d_1$ and $d_2$ are metric distance 
functions the given instance is an instance of \probl{Metric RecovTSP}.

We also study a multi-stage generalization of \probl{RecovTSP} which we call
\probl{$k$-Stage Recoverable Traveling Salesman Problem ($k$-St-RecovTSP)}. Here, $k \in \N$ is 
a given number of stages and an instance $(V, d_1, \dots, d_k, q)$ of \probl{$k$-St-RecovTSP} 
asks for $k$ tours $C_1, \dots, C_k$ that minimize the objective $\sum_{j=1}^{k} d_j(C_j)$,
subject to the constraint $\left|\bigcap_{1\leq j \leq k} E(C_j) \right| \geq q$.


\paragraph{Related Results.} The study of discrete optimization problems with intersection constraints, known 
as recoverable optimization problems, was initiated through their application in 
recoverable robust optimization~\cite{liebchen2009concept}. To define a recoverable robust problem, it is necessary to specify the set of scenarios that we wish to consider and protect against. The complexity of the resulting robust problem then depends on the choice of this uncertainty set. We refer to the survey \cite{kasperski2016robust} for an overview on existing complexity results, where the recoverable robust problem is denoted as robust optimization with incremental recourse. The \probl{RecovTSP} we consider is equivalent to a recoverable robust problem with a single scenario or with interval uncertainty, an observation that will be explained more formally in \cref{sec:robust}.

In \cite{kasperski2017robust}, the \probl{Recoverable Robust Selection Problem} was considered under discrete and interval uncertainty. In the \probl{Selection Problem}, the set of feasible solutions consists of all choices of exactly $p$ out of $n$ items. They show that in the case of discrete uncertainty, the problem becomes strongly NP-hard, while it can be solved in $O(qn^2)$ for interval uncertainty, where $q$ is the size of the intersection between the two solutions. Recently, \cite{lachmann2021linear} further improved this solution time to $O(n)$.
The setting of recoverable robustness with interval uncertainty has also been considered in the context of the spanning tree problem. In the \probl{Recoverable Spanning Tree Problem} (\probl{RecovST}), one is given a vertex set $V$, two distance functions $d_1, d_2$ on $V$ and an intersection size parameter $q \in \N$. The goal is to find two spanning
trees $T_1, T_2$ on $V$ such that $|T_1 \cap T_2| \geq q$ and $d_1(T_1) + d_2(T_2)$ 
is minimized. In \cite{hradovich2017recoverable}, it was proven that \probl{RecovST} can be solved optimally in polynomial time $O(qm^2n)$, where $m$ is the number of edges.

The recoverable robust setting was also considered for the \probl{Assignment Problem} in \cite{fischer2020investigation}, where the authors show W[1]-hardness and present special cases that can be solved in polynomial time. In a related single-machine \probl{Scheduling Problem} setting, \cite{bold2021recoverable} derive a 2-approximation algorithm. The \probl{Recoverable Robust Shortest Path Problem} was studied in \cite{busing2012recoverable}, where is was shown that the problem becomes NP-hard and not approximable in most settings.
The recoverable setting can also be considered for matroid bases \cite{busing2011phd}.
It has been generalized to other intersection constraints in the context of matroid bases \cite{lendl2021matroid}, where a strongly polynomial algorithm is presented for the case of a lower bound on the intersection. This setting was further generalized in \cite{iwamasa2021optimal}, where also nonlinear convex cost functions were considered.

So far, only little attention has been given to the \probl{RecovTSP} -- possibly, as the underlying problem is already hard for the non-robust setting.  In the short paper \cite{chassein2016recoverable}, different solution methods were proposed for a recoverable robust setting with so-called budgeted uncertainty sets, which generalize interval uncertainty. To the best of our knowledge, no previous work has derived approximation results for this setting.

\paragraph{Our Contributions.}

This paper is the first to provide complexity results for the \probl{RecovTSP}. If the required size $q$ of the intersection between the two solutions is part of the input, we show that there exists a polynomial time 4-approximation algorithm. We provide an example that shows that our analysis of the algorithm is tight. If $q$ is a constant number, an improved 2-approximation algorithm can be achieved, based on enumerating all possible intersection sets. This algorithm also extends to a more general setting, where an arbitrary number of tours need to be found, instead of only two. Finally, we discuss consequences of our results in the area of robust optimization.

\section{A 4-Approximation Algorithm for the \probl{RecovTSP}}
\label{sec:recovtspapprox}

\subsection{Main Result and Proof Idea}

In this section we prove the following result.

\begin{theorem}\label{thm:tsp:approx}
    There is a $4$-approximation algorithm for \probl{Metric RecovTSP}.
\end{theorem}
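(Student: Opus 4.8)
The plan is to first extract a cheap common ``backbone'' of edges and then grow it into one tour per metric, extending the classical double-tree method of~\cite{rosenkrantz1977analysis}. Two observations drive the approach. First, since any tour is a Hamiltonian cycle, the intersection $E(C_1)\cap E(C_2)$ of two tours is always a \emph{linear forest}, i.e.\ a vertex-disjoint union of paths. Second, if $C_1^*,C_2^*$ denote an optimal solution and $F^*:=E(C_1^*)\cap E(C_2^*)$, then $F^*$ is a linear forest with $|F^*|\ge q$, and because $F^*\subseteq E(C_i^*)$ we have $d_1(F^*)+d_2(F^*)\le d_1(C_1^*)+d_2(C_2^*)=\opt$. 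I would therefore begin by computing a minimum-weight linear forest $F$ with exactly $q$ edges with respect to the combined metric $d_1+d_2$; deleting edges from $F^*$ down to $q$ edges yields a feasible competitor, so $(d_1+d_2)(F)\le\opt$. I also record the standard bound $\mathrm{MST}_{d_i}\le d_i(C_i^*)$ (a tour minus one edge is a spanning tree of no larger weight), whence $\mathrm{MST}_{d_1}+\mathrm{MST}_{d_2}\le\opt$.

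Next, for each metric $d_i$ I would compute a minimum-weight spanning tree $T_i$ subject to $F\subseteq T_i$, by contracting the path-components of $F$ and computing an ordinary minimum spanning tree of the contracted graph. Since contraction cannot increase the spanning-tree cost, this gives $d_i(T_i)\le d_i(F)+\mathrm{MST}_{d_i}$. The core step is then to turn each $T_i$ into a tour $C_i$ that still contains every edge of $F$, at a cost of at most $2\,d_i(T_i)$. As in the double-tree method I would double all tree edges, take an Eulerian tour, and shortcut repeated vertices; by the triangle inequality the resulting tour costs at most $2\,d_i(T_i)$. The nonstandard requirement is that the shortcutting must preserve the edges of $F$, so that $F\subseteq E(C_1)\cap E(C_2)$ and the constraint $|E(C_1)\cap E(C_2)|\ge q$ is met. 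I would enforce this by contracting each path of $F$ before running the double-tree routine and expanding the paths $P_1,\dots,P_s$ afterwards, entering and leaving each $P_i$ at the two vertices of $\Pends(P_i)$; a careful charging argument then shows that the detour caused by traversing whole paths is absorbed into the $2\,d_i(T_i)$ budget.

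Combining the pieces gives $d_1(C_1)+d_2(C_2)\le 2\,d_1(T_1)+2\,d_2(T_2)\le 2(d_1+d_2)(F)+2(\mathrm{MST}_{d_1}+\mathrm{MST}_{d_2})\le 2\opt+2\opt=4\opt$, which is the claimed guarantee. I expect two places to require the most care. The first is the polynomial-time computation of a minimum-weight linear forest with a prescribed number $q$ of edges; unlike forests, the edge sets of maximum degree two do not form a matroid, so I would reduce this to a degree-constrained $b$-matching problem and argue that cheap cycles can be excluded without increasing the cost. The second, and the genuine heart of the proof, is the forest-respecting variant of the double-tree method: ensuring the shortcutting step never destroys an edge of $F$ while still meeting the $2\,d_i(T_i)$ bound. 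Everything else is a routine combination of the metric triangle inequality with the two lower bounds on $\opt$ derived above.
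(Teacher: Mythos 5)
Your overall architecture is sound in three of its four pieces: the lower bounds $(d_1+d_2)(F^*)\le\opt$ and $\mathrm{MST}_{d_1}+\mathrm{MST}_{d_2}\le\opt$, the extension of a linear forest to spanning trees by contraction with $d_i(T_i)\le d_i(F)+\mathrm{MST}_{d_i}$, and the forest-respecting double-tree step (which is exactly what the paper establishes in \cref{lemma:euler:tour:paths} and \cref{shortcut:paths}) would indeed combine to $2(\opt+\opt)=4\opt$. The fatal gap is your very first step: computing a minimum-weight linear forest with exactly $q$ edges is NP-hard when $q$ is part of the input. For $q=n-1$, a linear forest with $q$ edges on $n$ vertices is precisely a Hamiltonian path, so your subroutine would solve metric minimum-weight Hamiltonian path exactly; padding an instance with far-away dummy vertices extends the hardness to $q=\Theta(n)$ in general. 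Your proposed repair---solve a cardinality-$q$ minimum-weight degree-$2$-constrained $b$-matching and ``argue that cheap cycles can be excluded without increasing the cost''---is provably impossible for the same reason: if cycles could always be removed at no cost increase, the polynomial $2$-matching algorithm would compute optimal Hamiltonian paths. Concretely, at $q=n-1$ the cheapest degree-$\le 2$ subgraph with $n-1$ edges may be a short cycle through $n-1$ clustered vertices, whereas every linear forest with $n-1$ edges must be a Hamiltonian path visiting every outlier; no exchange argument bridges that gap. (A smaller slip, harmless by comparison: the intersection of two tours need not be a linear forest---it is the full Hamiltonian cycle when $C_1^*=C_2^*$---but since one may assume $q<n$, deleting down to $q$ edges restores your bound.)

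It is instructive to see how the paper evades exactly this obstruction: it never tries to compute or approximate the cheapest admissible intersection. Instead it solves the \probl{RecovST} instance $(V,d_1,d_2,q)$, which is polynomial-time solvable by \cite{hradovich2017recoverable}, and uses $d_1(T_1)+d_2(T_2)\le\opt$ (\cref{lemma:recovst:tsp}) as the single lower bound. The price is that $T_1\cap T_2$ is a forest whose components $K_j$ may contain high-degree vertices (e.g.\ stars), so an extra step converts each $K_j$ into a Hamiltonian path by running double-tree on $K_j$ under $d_1+d_2$; this loses a factor $2$ (\cref{commonshortcut}, using that $K_j$ is a minimum spanning tree of its vertex set), and the final double-tree loses another factor $2$, again totalling $4$---the same constant as yours, but with the NP-hard guessing step replaced by a polynomially solvable relaxation whose output is a forest rather than a linear forest. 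Your instinct to fix the intersection as a linear forest up front is not wasted: it is essentially what makes enumeration work when $q$ is constant, which is the paper's \cref{thm:tsp:approx:constantq} (a $2$-approximation, even for $k$ tours). For $q$ in the input, however, you must either exhibit a polynomial-time certifiably-cheap $q$-edge linear forest---which you have not done and which your $b$-matching route cannot deliver---or reroute the lower bound through a tractable structure such as the recoverable spanning tree, as the paper does.
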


We first explain the idea of the algorithm before we describe it formally. The starting point of our algorithm is the \probl{RecovST}. Given an instance $(V, d_1, d_2, q)$ of the \probl{Recoverable TSP}, we begin by obtaining an optimal solution $(T_1, T_2)$ to the \probl{RecovST} with the same parameters $(V, d_1, d_2, q)$. The trees $T_1$ and $T_2$ already have a sufficiently large intersection $|T_1 \cap T_2| \geq q$, so we would like to transform them into tours $C_1, C_2$ with the same intersection. However, there is a problem: The vertices in $T_1 \cap T_2$ could have degree greater than 2 in $T_1 \cap T_2$. For example, every component of $T_1 \cap T_2$ could be a star. But clearly in the final tours $C_1$ and $C_2$ 
every vertex must have degree 2 (here it is important to note that in the \probl{RecovTSP}, we do not allow a TSP tour to travel along the same edge multiple times). 
So $T_1 \cap T_2 = E(C_1) \cap E(C_2)$ is not possible in general. Usually, in the TSP literature, a tree is transformed into a tour by making it Eulerian (either considering the double-tree or inserting matching edges in Christofides' algorithm) and shortcutting an Eulerian tour of the resulting graph. However, observe that the shortcutting procedure will result in different outcomes for the trees $T_1$ and $T_2$, because they are different outside of their common intersection $T_1 \cap T_2$. Therefore, it is easy to see that the procedure of shortcutting does not preserve the property that $T_1$ and $T_2$ have a sufficiently large intersection. Let $K_1, \dots, K_r$ be the connected components of $T_1 \cap T_2$. As described above, the problem is that $K_j$ is not necessarily a path for $j = 1, \dots, r$. We can solve this problem by actually replacing $K_j$ with some Hamilton path $P_j$ that traverses all vertices of $K_j$. In order to find $P_j$, we can approximately solve the TSP problem on the sub-instance on vertex set $K_j$ and metric $d_1 + d_2$. 
Replacing each component $K_j$ by $P_j$, we obtain some new graph from $T_i$, which we call $T_i'$. We will show that if we double all the edges of $T_i'$, we can obtain an Eulerian circuit $W''_i$, which contains each of the paths $P_1, \dots, P_r$ as simple subpaths. We finally show that it is possible to shortcut $W''_i$ into a tour $C_i$ in such a way that the subpaths $P_1, \dots, P_r$ are preserved. Then we have that $P_1, \dots, P_r \subseteq C_i$ for $i = 1,2$, and therefore $|E(C_1) \cap E(C_2)| \geq q$. 
We will finally show that during the whole procedure we lose at most a constant factor compared to the optimal solution of \probl{RecovTSP}. Algorithm~\ref{algo:tsp:approx} provides a description in pseudo-code and \cref{fig:algorithm} depicts an example run of the algorithm.
\tikzstyle{vertex}=[draw,circle,fill=black, minimum size=4pt,inner sep=0pt]
\tikzstyle{edge} = [draw,-]
\tikzstyle{T1edge} = [draw,blue!60,densely dashed,line width=2,-]
\tikzstyle{T2edge} = [draw,OliveGreen,loosely dotted, line width=2,-]
\tikzstyle{IntersectionEdge} = [draw,red!50,line width=2.5,-]
\tikzset{IntersectionEdgeDirected/.style={->,red!50,red!50,line width=2.5,> = stealth}}
\tikzset{IntersectionEdgeDirectedThick/.style={->,red!50,red!50,line width=3,> = stealth}}
\tikzset{T1EdgeDirected/.style={->,draw,blue!60,dashed,line width=2,> = stealth}}
\tikzstyle{weight} = [font=\small]
\begin{figure}[tbhp]
\begin{subfigure}{0.45\textwidth}
\centering
\begin{tikzpicture}[scale=0.43, auto,swap]

    \foreach \pos/\name in {{(3,8)/v1}, {(11,8)/v2}, {(1,3)/v3}, {(3,6)/v4}, 
                            {(5,5)/v5}, {(7,6)/v6}, {(9,5)/v7}, {(11,6)/v8}, 									{(5,3)/v9}, {(9,3)/v10}, {(13,3)/v11}, {(3,2)/v12},
                            {(7,2)/v13}, {(11,2)/v14}}{
        \node[vertex] (\name) at \pos {};
    }
    \foreach \source/\dest in {v1/v4, v5/v6, v6/v7, v9/v13, v2/v8}{
        \draw[T1edge] (\source) to (\dest);
    }
    \foreach \source/\dest in {v5/v10, v13/v10, v5/v1, v1/v6, v2/v7}{
        \draw[T2edge] (\source) to (\dest);
    }
    \foreach \source/ \dest in {v3/v5, v4/v5, v5/v9, v5/v12, v7/v8, v8/v14, 
   								v10/v14, v14/v11}{
       \draw[IntersectionEdge] (\source) to (\dest);
    }
\end{tikzpicture}
 \subcaption{The tree $T_1$ (blue dashed edges and red edges), the tree $T_2$ (green dotted edges and red edges) and their intersection $T_1 \cap T_2$ (red edges).}
 \label{subfig:a}
\end{subfigure} \hfill
\begin{subfigure}{0.45\textwidth}
\centering
\begin{tikzpicture}[scale=0.43, auto,swap]

    \foreach \pos/\name in {{(3,8)/v1}, {(11,8)/v2}, {(1,3)/v3}, {(3,6)/v4}, 
                            {(5,5)/v5}, {(7,6)/v6}, {(9,5)/v7}, {(11,6)/v8}, 									{(5,3)/v9}, {(9,3)/v10}, {(13,3)/v11}, {(3,2)/v12},
                            {(7,2)/v13}, {(11,2)/v14}}{
        \node[vertex] (\name) at \pos {};
    }
    \foreach \source/\dest in {v1/v4, v5/v6, v6/v7, v9/v13, v2/v8}{
        \draw[T1edge] (\source) to (\dest);
    }
    \foreach \source/ \dest in {v3/v12, v4/v5, v5/v9, v9/v12, v7/v8, v7/v10, 
   								v10/v14, v14/v11}{
       \draw[IntersectionEdge] (\source) to (\dest);
    }
\end{tikzpicture}
 \label{subfig:b}
 \subcaption{The tree $T_1'$ is created from $T_1$ by substituting the connected components of $T_1 \cap T_2$ with simple paths $P_1, \dots, P_r$.}
\end{subfigure}
\par\bigskip
\begin{subfigure}{0.45\textwidth}
\centering
\begin{tikzpicture}[scale=0.43, auto,swap]

    \foreach \pos/\name in {{(3,8)/v1}, {(11,8)/v2}, {(1,3)/v3}, {(3,6)/v4}, 
                            {(5,5)/v5}, {(7,6)/v6}, {(9,5)/v7}, {(11,6)/v8}, 									{(5,3)/v9}, {(9,3)/v10}, {(13,3)/v11}, {(3,2)/v12},
                            {(7,2)/v13}, {(11,2)/v14}}{
        \node[vertex] (\name) at \pos {};
    }
    \foreach \source/\dest in {v1/v4, v5/v6, v6/v7, v9/v13, v2/v8}{
        \draw[T1edge, bend left = 20] (\source) to (\dest);
        \draw[T1edge, bend left = 20] (\dest) to (\source);
    }
    \foreach \source/ \dest in {v3/v12, v4/v5, v5/v9, v9/v12, v7/v8, v7/v10, 
   								v10/v14, v14/v11}{
       \draw[IntersectionEdge, bend left = 20] (\source) to (\dest);
       \draw[IntersectionEdge, bend left = 20] (\dest) to (\source);
    }
\end{tikzpicture}
 \label{subfig:c}
 \subcaption{The graph $T_1''$ is created from $T_1'$ by doubling all its edges.}
\end{subfigure} \hfill
\begin{subfigure}{0.45\textwidth}
\centering
\begin{tikzpicture}[scale=0.43, auto,swap]

    \foreach \pos/\name in {{(3,8)/v1}, {(11,8)/v2}, {(1,3)/v3}, {(3,6)/v4}, 
                            {(5,5)/v5}, {(7,6)/v6}, {(9,5)/v7}, {(11,6)/v8}, 									{(5,3)/v9}, {(9,3)/v10}, {(13,3)/v11}, {(3,2)/v12},
                            {(7,2)/v13}, {(11,2)/v14}}{
        \node[vertex] (\name) at \pos {};
    }
    \foreach \source/\dest in {v1/v4, v5/v6, v6/v7, v9/v13, v2/v8}{
        \draw[T1EdgeDirected, bend left = 20] (\source) to (\dest);
        \draw[T1EdgeDirected, bend left = 20] (\dest) to (\source);
    }
    \foreach \source/ \dest in {v3/v12, v5/v4, v9/v5, v12/v9, v8/v7, v7/v10, 
   								v10/v14, v14/v11}{
       \draw[IntersectionEdgeDirected] (\dest) to (\source);
    }
    \draw[IntersectionEdgeDirected] (v3) to node[left, black] {$\tilde{e}_1$} (v4);
    \draw[IntersectionEdgeDirected] (v8) to  node[right, black] {$\tilde{e}_2$} (v11);
\end{tikzpicture}
 \label{subfig:d}
 \subcaption{The graph $\tilde{T}_1$ is created from $T_1''$ by substituting each path $P_j$ by a special edge $\tilde{e}_j$. The arrows indicate an Eulerian circuit in $\tilde{T}_1$.}
\end{subfigure}
\par\bigskip
\begin{subfigure}{0.45\textwidth}
\centering
\begin{tikzpicture}[scale=0.43, auto,swap]

    \foreach \pos/\name in {{(3,8)/v1}, {(11,8)/v2}, {(1,3)/v3}, {(3,6)/v4}, 
                            {(5,5)/v5}, {(7,6)/v6}, {(9,5)/v7}, {(11,6)/v8}, 									{(5,3)/v9}, {(9,3)/v10}, {(13,3)/v11}, {(3,2)/v12},
                            {(7,2)/v13}, {(11,2)/v14}}{
        \node[vertex] (\name) at \pos {};
    }
    \foreach \source/\dest in {v1/v4, v5/v6, v6/v7, v9/v13, v2/v8}{
        \draw[T1EdgeDirected, bend left = 20] (\source) to (\dest);
        \draw[T1EdgeDirected, bend left = 20] (\dest) to (\source);
    }
    \foreach \source/ \dest in {v12/v3, v4/v5, v5/v9, v9/v12, v7/v8, v10/v7, 
   								v14/v10, v11/v14}{
       \draw[IntersectionEdgeDirectedThick, bend left = 20, line width = 1] (\source) to (\dest);
    }
	\foreach \source/ \dest in {v12/v3, v4/v5, v5/v9, v9/v12, v7/v8, v10/v7, 
   								v14/v10, v11/v14}{
       \draw[IntersectionEdgeDirectedThick, bend left = 20] (\dest) to (\source);
    }
    
\end{tikzpicture}
 \label{subfig:e}
 \subcaption{We traverse $P_j$ instead of traversing $\tilde{e}_j$. This way we obtain an Eulerian circuit $W_1''$ of $T_1''$ which contains $P_1, \dots, P_r$ as subpaths (thick edges). }
\end{subfigure} \hfill
\begin{subfigure}{0.45\textwidth}
\centering
\begin{tikzpicture}[scale=0.43, auto,swap]

    \foreach \pos/\name in {{(3,8)/v1}, {(11,8)/v2}, {(1,3)/v3}, {(3,6)/v4}, 
                            {(5,5)/v5}, {(7,6)/v6}, {(9,5)/v7}, {(11,6)/v8}, 									{(5,3)/v9}, {(9,3)/v10}, {(13,3)/v11}, {(3,2)/v12},
                            {(7,2)/v13}, {(11,2)/v14}}{
        \node[vertex] (\name) at \pos {};
    }
    \foreach \source/ \dest in {v3/v12, v4/v5, v5/v9, v9/v12, v7/v8, v7/v10, 
   								v10/v14, v14/v11}{
       \draw[IntersectionEdge] (\source) to (\dest);
    }
	\foreach \source/ \dest in {v1/v4, v1/v6, v6/v2, v2/v8, v11/v13, v13/v3}{
       \draw[edge] (\source) to (\dest);
    }
\end{tikzpicture}
 \label{subfig:f}
 \subcaption{The final tour $C_1$ is obtained by shortcutting $W_1''$ such that the subpaths $P_1, \dots, P_r$ are preserved.}
\end{subfigure}
\caption{Schematic sketch of the 4-approximation for \probl{RecovTSP}.}
\label{fig:algorithm}
\end{figure}
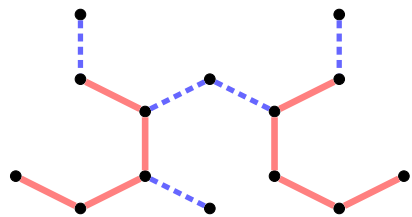
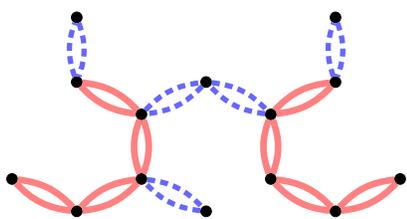
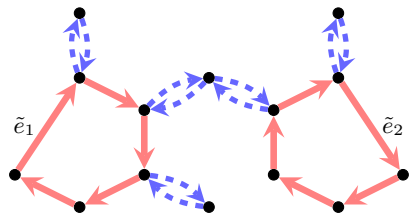
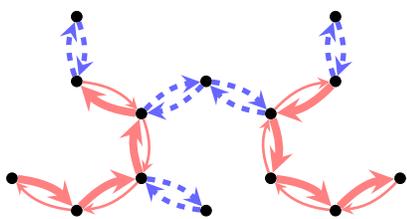
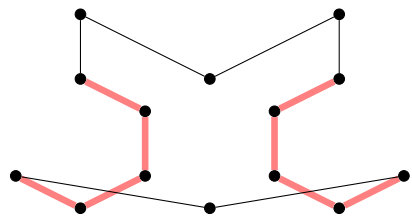

\begin{algorithm}[htb]
    \SetKwData{Left}{left}\SetKwData{This}{this}\SetKwData{Up}{up}
    \SetKwInOut{Input}{Input}\SetKwInOut{Output}{Output}
    \Input{An instance $(V,d_1, d_2, q)$ of the metric recoverable TSP, where\\
    $V$ is a set of vertices,\\
    $d_1, d_2$ are two metric distance functions on $V$,\\
    $q \in \N$ is the intersection size parameter.}
    \Output{Two tours $C_1, C_2$ on $V$ such that $|C_1 \cap C_2| \geq q$.}
    \BlankLine

    $(T_1, T_2) \gets$ Optimal solution of the \probl{RecovST} instance $(V, d_1, d_2, q)$.

    Set $T'_1 \gets T_1; T'_2 \gets T_2; \mathcal{P} \gets \emptyset$.\\
    Let $K_1, \dots, K_r$ be the connected components of $T_1 \cap T_2$.

    \ForEach{$j \in \{1,\dots,r\}$}{
       $P_j \gets$ Hamilton path in $K_j$ obtained by approximating the TSP problem on $K_j$ with respect to the distance function $d_1 + d_2$ by using the double-tree heuristic.

        Replace $K_j$ by $P_j$ in $T'_1$ and $T'_2$.

        $\mathcal{P} \gets \mathcal{P} \cup \{P_j\}$
    }

    $T''_i \gets T'_i + T'_i$ for $i=1,2$.

    $W''_i \gets $ Eulerian circuit of the graph $(V, T''_i)$ such that all paths in $\mathcal{P}$
    are simple subpaths of $W''_i$ (see \cref{lemma:euler:tour:paths}).

    Execute the shortcutting explained in \cref{shortcut:paths} on $W''_i$ to obtain 
    $C_i$ for $i=1,2$.
    
    \Return{$(C_1, C_2)$}
    \caption{Approximation algorithm for metric recoverable TSP.}
    \label{algo:tsp:approx}
\end{algorithm}\DecMargin{1em}

\subsection{Recoverable Spanning Trees and TSP}\label{sec:tress:tsp}

We are now ready to prove the correctness and the approximation guarantee of Algorithm~\ref{algo:tsp:approx}. The key observation is that the optimal 
objective value of \probl{RecovTSP} can be lower bounded by the optimum 
objective value of \probl{RecovST}. In the following lemma we assume $q < n$; note that if $q = n$, then the \probl{RecovTSP} is exactly the classical TSP and therefore we do not need to consider this case. 

\begin{lemma}\label{lemma:recovst:tsp}
    Let $(V, d_1, d_2, q)$ be an instance of the \probl{Metric RecovTSP} and let $\opt$ be its 
    optimal objective value. Let $T_1, T_2$ be an optimal solution to the 
    corresponding \probl{Metric RecovST} instance $(V, d_1, d_2, q)$. 
    If $q < n$, it holds that $d_1(T_1) + d_2(T_2) \leq \opt$.
\end{lemma}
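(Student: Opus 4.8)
The plan is to exhibit, from any optimal \probl{RecovTSP} solution, a \emph{feasible} solution of the \probl{RecovST} instance whose cost does not exceed $\opt$. Since $(T_1, T_2)$ is an \emph{optimal} \probl{RecovST} solution, its cost can only be smaller, and the claim follows. Concretely, let $(C_1^*, C_2^*)$ be an optimal \probl{RecovTSP} solution, so that $d_1(C_1^*) + d_2(C_2^*) = \opt$ and $|E(C_1^*) \cap E(C_2^*)| \geq q$. Each $C_i^*$ is a tour on $V$, hence has exactly $n$ edges; deleting a single edge turns it into a Hamilton path, i.e.\ a spanning tree. The delicate part is to delete one edge from each tour so that (i) the common edge set still has size at least $q$, while (ii) the objective does not increase; the latter is automatic because $d_1, d_2 \geq 0$.

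Write $F := E(C_1^*) \cap E(C_2^*)$, so $|F| \geq q$. First I would treat the typical case $|F| < n$. Then both $E(C_1^*) \setminus F$ and $E(C_2^*) \setminus F$ are nonempty, since each $E(C_i^*)$ has $n$ edges and contains $F$; so I can pick $e_1 \in E(C_1^*) \setminus F$ and $e_2 \in E(C_2^*) \setminus F$. Let $T_i'$ be the spanning tree with edge set $E(C_i^*) \setminus \{e_i\}$. Then
\[
    T_1' \cap T_2' = \big(E(C_1^*) \setminus \{e_1\}\big) \cap \big(E(C_2^*) \setminus \{e_2\}\big) = F \setminus \{e_1, e_2\} = F,
\]
because neither $e_1$ nor $e_2$ lies in $F$. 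Hence $|T_1' \cap T_2'| = |F| \geq q$, so $(T_1', T_2')$ is \probl{RecovST}-feasible.

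The main obstacle is the degenerate case $|F| = n$. Here two $n$-edge sets whose intersection already has $n$ edges must coincide, so $E(C_1^*) = E(C_2^*) = F$ and there is no edge outside $F$ to delete from either tour. This is exactly where the hypothesis $q < n$ is needed: I would delete the \emph{same} edge $e \in F$ from both tours, obtaining the identical spanning tree $T_1' = T_2'$ with edge set $F \setminus \{e\}$, whose self-intersection has $n - 1 \geq q$ edges. Thus \probl{RecovST}-feasibility is preserved in both cases.

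Finally, in either case each $T_i'$ is obtained from $C_i^*$ by removing edges of nonnegative weight, so $d_1(T_1') \leq d_1(C_1^*)$ and $d_2(T_2') \leq d_2(C_2^*)$, giving $d_1(T_1') + d_2(T_2') \leq \opt$. Since $(T_1, T_2)$ is optimal for the \probl{RecovST} instance, I conclude $d_1(T_1) + d_2(T_2) \leq d_1(T_1') + d_2(T_2') \leq \opt$, as desired.
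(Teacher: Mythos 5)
Your proof is correct and takes essentially the same route as the paper's: remove one edge from each tour of an optimal \probl{RecovTSP} solution so that the intersection retains at least $q$ edges, note that the resulting Hamiltonian paths are feasible spanning trees for the \probl{RecovST} instance, and conclude by optimality of $(T_1, T_2)$ together with nonnegativity of $d_1, d_2$. Your explicit case distinction on whether $|E(C_1^*) \cap E(C_2^*)| = n$ simply fills in the detail that the paper compresses into the single assertion ``since $q < n$ we can select $e_1$ and $e_2$,'' and it correctly identifies that the hypothesis $q < n$ is only needed in the degenerate case where the two tours coincide.
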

\begin{proof}
    Let $C_1, C_2$ be any feasible solution to the \probl{RecovTSP}, i.e. they 
    are tours on $V$ such that $|E(C_1) \cap E(C_2)| \geq q$. Since $q < n$ we can select $e_1 \in E(C_1)$ and $e_2 \in E(C_2)$ 
    such that for $T'_1 := E(C_1) \setminus \{e_1\}$ and $T'_2 := E(C_2) \setminus \{e_2\}$
    it still holds that $|T'_1 \cap T'_2| \geq q$. Note that both $T'_1$ and $T'_2$ are edge sets of  
    Hamiltonian paths on $V$ and hence feasible solutions for the \probl{RecovST} instance $(V, d_1, d_2, q)$,
    implying $d_1(T_1) + d_2(T_2) \leq d_1(T'_1) + d_2(T'_2) \leq d_1(C_1) + d_2(C_2) \leq \opt$.
    \qed
\end{proof}


\subsection{Shortcutting Common Subtrees into Paths}
\label{sec:shortcut:subtrees}

We aim to obtain spanning trees that allow for a shortcutting without decreasing the size 
of the intersection. As already mentioned, we do this by substituting each of the components $K_j$ by a Hamilton path $P_j$ in $K_j$. Doing this substitution for every $j=1,\dots,r$ transforms $T_i$ into $T_i'$. The question is which paths to select. 
The following lemma shows that if the double-tree heuristic is used, we obtain an approximation guarantee of a factor 2.

\begin{lemma}\label{commonshortcut}
If for each $j =1,\dots,r$ the path $P_j$ is computed using the double tree heuristic on $K_j$ with respect to $d_1 + d_2$, then $d_1(T_1') + d_2(T_2') \leq 2(d_1(T_1) + d_2(T_2))$.
\end{lemma}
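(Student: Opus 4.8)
The goal is to bound the cost of the substituted trees $T_1', T_2'$ in terms of the cost of the original recoverable spanning trees $T_1, T_2$. The plan is to decompose the edge set of each $T_i'$ into the parts that were modified and the parts that stayed fixed, and to bound the modified parts component by component using the known guarantee of the double-tree heuristic. First I would observe that $T_i' = (T_i \setminus (T_1\cap T_2)) \cup \big(\bigcup_{j=1}^r E(P_j)\big)$: the edges of $T_i$ lying outside the common intersection are untouched, while the components $K_1,\dots,K_r$ of $T_1\cap T_2$ are each replaced by the Hamilton path $P_j$. Since the replacement is identical in $T_1'$ and $T_2'$, the same paths $P_j$ appear in both, which will later guarantee the intersection is preserved; here it only matters for the cost accounting.

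The heart of the argument is the per-component estimate. For a fixed $j$, the double-tree heuristic applied to the sub-instance on vertex set $K_j$ with metric $d_1+d_2$ produces $P_j$ with $(d_1+d_2)(P_j) \le 2\cdot \mathrm{OPT}_{\mathrm{TSP}}(K_j, d_1+d_2)$, where $\mathrm{OPT}_{\mathrm{TSP}}$ denotes the optimal tour cost on that sub-instance. The key step is to upper bound this optimal tour cost by the cost of $K_j$ itself. Since $K_j$ is a \emph{spanning tree} of its vertex set (being a connected component of $T_1\cap T_2$), doubling its edges yields an Eulerian graph whose Eulerian walk shortcuts (using the triangle inequality for the metric $d_1+d_2$) to a tour of cost at most $2\,(d_1+d_2)(K_j)$. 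This already gives $(d_1+d_2)(P_j) \le 2 \cdot 2 (d_1+d_2)(K_j)$, which is too weak by a factor of $2$; so instead I would use the sharper bound that the optimal tour cost on $K_j$ is at most $2(d_1+d_2)(K_j)$ \emph{and} that a Hamilton path is cheaper than a tour, but more cleanly, I would invoke that the double-tree heuristic run directly on the spanning tree $K_j$ produces a path of cost at most $2(d_1+d_2)(K_j)$ by doubling $K_j$, taking an Eulerian circuit, and shortcutting into a path. This directly yields the factor $2$ without the intermediate detour through $\mathrm{OPT}_{\mathrm{TSP}}$.

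With the per-component bound $(d_1+d_2)(P_j) \le 2(d_1+d_2)(K_j)$ in hand, I would sum over $j$ and then split $d_1+d_2$ back into its two halves. Summing the contributions of the replaced components and adding the untouched edges gives
\begin{align*}
d_1(T_1') + d_2(T_2')
&= d_1\big(T_1 \setminus (T_1\cap T_2)\big) + d_2\big(T_2 \setminus (T_1\cap T_2)\big) + \sum_{j=1}^r (d_1+d_2)(P_j)\\
&\le d_1\big(T_1 \setminus (T_1\cap T_2)\big) + d_2\big(T_2 \setminus (T_1\cap T_2)\big) + 2\sum_{j=1}^r (d_1+d_2)(K_j).
\end{align*}
Recognizing that $\sum_j (d_1+d_2)(K_j) = (d_1+d_2)(T_1\cap T_2) = d_1(T_1\cap T_2) + d_2(T_1\cap T_2)$, and that the untouched edges plus the intersection edges reassemble $d_1(T_1)+d_2(T_2)$ once one notes $d_1(T_1) = d_1(T_1\setminus(T_1\cap T_2)) + d_1(T_1\cap T_2)$ and similarly for $T_2$, the right-hand side is at most $2(d_1(T_1) + d_2(T_2))$, completing the proof.

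\textbf{Main obstacle.} The delicate point is the factor bookkeeping: a naive chain through $\mathrm{OPT}_{\mathrm{TSP}}(K_j)$ loses an extra factor of $2$ (doubling the tree to bound the tour, then doubling again in the heuristic). The clean resolution is to observe that the double-tree heuristic on the \emph{already-tree} $K_j$ doubles $K_j$ directly, so the sole factor of $2$ comes from that single doubling, and the shortcutting only decreases cost by the triangle inequality of the combined metric $d_1+d_2$. Verifying that $d_1+d_2$ is indeed a metric (so shortcutting is valid) is the one hypothesis to check, and it follows immediately since a sum of metrics is a metric.
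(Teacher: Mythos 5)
Your proposal is correct and follows essentially the same route as the paper's proof: the same decomposition $T_i' = (T_i \setminus (T_1\cap T_2)) \cup \bigcup_j E(P_j)$, the same per-component bound $(d_1+d_2)(P_j) \leq 2(d_1+d_2)(K_j)$ from the double-tree heuristic, and the same summation splitting $d_i(T_i)$ into intersection and non-intersection parts. If anything, your bookkeeping is slightly more careful than the paper's, which asserts that $K_j$ is \emph{exactly} a minimum spanning tree with respect to $d_1+d_2$, whereas you correctly observe that only the inequality $\mathrm{MST}(K_j, d_1+d_2) \leq (d_1+d_2)(K_j)$ (trivial, since $K_j$ is itself a spanning tree of its vertex set) is needed.
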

\begin{proof}
   First, observe that the tree $T_1 \cap T_2 \cap K_j$ is actually already a minimum spanning tree of $K_j$, with respect to the metric $d_1 + d_2$. 
   This fact together with the choice of $P_j$ proves that $(d_1 + d_2)(P_j) \leq 2(d_1 + d_2)(T_1 \cap T_2 \cap K_j)$. 
   By the definition of $T'_i$ we have
\begin{align*}
    d_1(T_1') + d_2(T_2')
   &= \left(\sum_{j=1}^r(d_1+d_2)(P_j)\right) + d_1(T_1 \setminus T_2) + d_2(T_2 \setminus T_1)\\
  &\leq  2(d_1(T_1) + d_2(T_2)).
\end{align*}     
   \qed
\end{proof}


Given the trees $T'_1, T'_2$ and the pairwise vertex-distinct paths $\mathcal{P}$ 
such that $\bigcup_{P \in \mathcal{P}} E(P) = T'_1 \cap T'_2$,
we let $T''_i := T'_i + T'_i$ be the multiset which contains every edge of the edge set $T'_i$ exactly twice.
The following lemma shows how to obtain an Eulerian cycle $W''_i$ in the (multi-)graph $(V, T''_i)$ 
such that each $P \in \mathcal{P}$ is a subpath of $W''_j$ for $i=1,2$.

\begin{lemma}\label{lemma:euler:tour:paths}
    For $i=1,2$, let $T''_i$ be the tree obtained in line~7 of \cref{algo:tsp:approx}.
    There exists an Eulerian tour $W''_i$ in the graph $(V, T''_i)$
    such that all paths $P \in \mathcal{P}$ are subpaths of $W''_i$.
\end{lemma}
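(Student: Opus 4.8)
The plan is to reduce the problem to an ordinary Euler-circuit question on an auxiliary multigraph, solve that, and then lift the circuit back by expanding the contracted objects into the desired subpaths. For each path $P_j \in \mathcal{P}$ write $a_j, b_j$ for its two endpoints, and build an auxiliary multigraph $\tilde{T}_i$ from $T''_i$ as follows: for every $j$, delete one of the two copies of each edge of $P_j$ (so that exactly one copy of the path $P_j$ survives in $\tilde{T}_i$), and add a single new \emph{shortcut edge} $\tilde{e}_j = \{a_j, b_j\}$ joining its endpoints. The idea is that $\tilde{e}_j$ stands in for ``one contiguous traversal of the whole path $P_j$'', while the surviving single copy will later supply the second traversal demanded by the doubling.

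First I would verify that $\tilde{T}_i$ still admits an Eulerian circuit, i.e.\ that it is connected and all degrees are even. Connectivity is immediate, since deleting a duplicate copy of an edge leaves the underlying simple graph unchanged and we only add edges on top of a connected graph. For the parity I would use that the paths in $\mathcal{P}$ are pairwise vertex-disjoint, so each vertex is touched by at most one $P_j$: an interior vertex of $P_j$ loses one copy of each of its two incident path edges, lowering its (even) degree by $2$; an endpoint $a_j$ loses one copy of its unique incident path edge but gains the shortcut edge $\tilde{e}_j$, leaving its degree unchanged; every remaining vertex is untouched. Hence all degrees stay even and an Eulerian circuit $\tilde{W}_i$ of $\tilde{T}_i$ exists by Euler's theorem.

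Then I would lift $\tilde{W}_i$ to the required circuit $W''_i$ by replacing, in $\tilde{W}_i$, each occurrence of a shortcut edge $\tilde{e}_j$ by the path $P_j$ traversed in whichever of the two directions matches how $\tilde{W}_i$ uses $\tilde{e}_j$. I would then check three things: (i) $W''_i$ is a closed walk, since substituting a path with endpoints $a_j, b_j$ for the edge with endpoints $a_j, b_j$ preserves closedness; (ii) $W''_i$ uses every edge of $T''_i$ exactly once --- the non-path edges inherit their two traversals directly from $\tilde{W}_i$, while each edge of a given $P_j$ is traversed once through the surviving single copy already present in $\tilde{W}_i$ and once more when $\tilde{e}_j$ is expanded, which together account for exactly the two copies present in $T''_i = T'_i + T'_i$; and (iii) each $P_j$ appears as a subpath of $W''_i$, precisely because expanding the single edge $\tilde{e}_j$ inserts all of $P_j$ as one contiguous block.

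The main obstacle I anticipate is the edge-multiplicity bookkeeping in the lifting step: one must confirm that the scattered surviving single copies of the $P_j$-edges in $\tilde{W}_i$, together with the copy produced by expanding $\tilde{e}_j$, reproduce exactly the doubled edge multiset of $T''_i$ and nothing more. The vertex-disjointness of $\mathcal{P}$ is what keeps both the degree-parity argument and this multiplicity count purely local and hence clean; if endpoints could be shared, several shortcut edges would meet at a common vertex and the parity bookkeeping would need to be redone.
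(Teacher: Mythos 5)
Your proposal is correct and follows essentially the same route as the paper's proof: contract each $P_j$ into a special shortcut edge $\tilde{e}_j$ while deleting one copy of each path edge, verify even degrees and connectivity, take an Eulerian circuit, and expand each $\tilde{e}_j$ back into $P_j$. Your additional bookkeeping on edge multiplicities and your explicit use of vertex-disjointness in the parity argument only make explicit what the paper leaves implicit.
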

\begin{proof}
    For both $i=1,2$, we construct a new graph $(V, \tilde{T}_i)$ by first copying the graph $(V, T''_i)$.
    Then, for each path $P_j \in \mathcal{P}$ such that $P_j = (v_1, \dots, v_{\ell})$
    we remove for each $t=1,\dots,\ell-1$ one of the two copies of the edge $\{v_t, v_{t+1}\}$ from $\tilde{T}_i$
    and add the special edge $\tilde{e}_j = \{v_{\ell}, v_1\}$ to $\tilde{T}_i$. 
    It then still holds that $(V,\tilde{T}_i)$ is Eulerian, 
    since each vertex degree stays even and the graph remains connected.
    Hence, there exists an Eulerian circuit $\tilde{W}_i$ in $(V, \tilde{T}_i)$. 
    This circuit $\tilde{W}_i$ traverses for each $P_j \in \mathcal{P}$ the previously added special edge $\tilde{e}_j$.
    We construct $W''_i$ by replacing for each $P_j \in \mathcal{P}$ the edge $\tilde{e}_j$ by the 
    path $P_j$, traversed in the corresponding direction. Note that, as claimed, $W''_i$ is 
    an Eulerian tour in $(V, T''_i)$ with each $P \in \mathcal{P}$ as a subpath.
    \qed
\end{proof}

\subsection{Shortcutting Without Skipping Paths}
\label{sec:shortcutting}

\begin{lemma}\label{shortcut:paths}
    Let $W''$ be a closed walk on $V$ and $\mathcal{P}$ be a set of pairwise vertex-disjoint subpaths of $W''$.
    Then, for any metric $d$, there exists a tour $C$ on $V$ such that 
    $d(C) \leq d(W'')$ and $C$ contains all paths in $\mathcal{P}$ as subpaths.
\end{lemma}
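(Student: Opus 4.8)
The plan is to carry out the classical metric-shortcutting argument, but with a deliberate choice of which occurrence of each vertex to retain so that the subpaths in $\mathcal{P}$ survive intact. Write the closed walk as $W'' = (w_0, w_1, \dots, w_m)$ with $w_m = w_0$, and note that in order for a tour on $V$ to exist $W''$ must visit every vertex of $V$ (at least once). Since the paths in $\mathcal{P}$ are pairwise vertex-disjoint, each vertex lies on at most one $P \in \mathcal{P}$, which is what makes the retention choice below consistent.

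First I would fix, for every vertex $v \in V$, a single \emph{canonical} position in $W''$. If $v$ lies on some path $P_j \in \mathcal{P}$, I choose the occurrence of $v$ inside one fixed appearance of $P_j$ as a contiguous block of $W''$, say occupying positions $a_j, a_j+1, \dots, a_j + \ell - 1$ where $P_j = (v_1, \dots, v_\ell)$; otherwise I choose, say, the first occurrence of $v$ in $W''$. I then let $C$ be the cyclic sequence obtained by reading off the canonical positions in the order in which they appear along $W''$ and joining cyclically consecutive retained vertices by a direct edge.

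Then I would verify the three required properties. That $C$ is a tour is immediate: exactly one position per vertex is canonical, so each vertex occurs in $C$ exactly once. That each $P_j = (v_1, \dots, v_\ell)$ is preserved as a subpath follows because its canonical positions are precisely the consecutive positions $a_j, \dots, a_j + \ell - 1$; since these are consecutive integers, no other canonical position can lie between two of them, so $v_1, \dots, v_\ell$ appear consecutively and in the correct order in $C$. For the length bound I would observe that the canonical positions cut $W''$ into walk segments, one between each pair of cyclically consecutive canonical vertices, and that these segments partition the edges of $W''$; applying the triangle inequality for $d$ repeatedly bounds the length of each shortcut edge by the length of its segment, and summing over all segments yields $d(C) \leq d(W'')$. (For a segment lying internal to a path the segment is already a single edge, so that term contributes with equality.)

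I expect the only genuine subtlety, and hence the step to phrase most carefully, to be the choice of canonical occurrences: one must retain the \emph{in-path} occurrence of each path vertex rather than, for instance, its first occurrence in $W''$, as otherwise the consecutiveness that guarantees path preservation could be broken. The vertex-disjointness of $\mathcal{P}$ is exactly what makes this well defined — no vertex is claimed by two different paths, and distinct path-blocks cannot interfere with one another.
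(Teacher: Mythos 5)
Your proposal is correct and is essentially the paper's own argument in declarative form: the paper performs the same shortcutting iteratively (keeping each non-path vertex at its first visit and each path vertex during the first full traversal of its path, which is precisely your choice of canonical occurrences), and bounds the result by the same triangle-inequality accounting over the walk segments between retained positions. The one point you handle implicitly --- a path-block wrapping around the closing point of the walk --- is the same normalization the paper makes by assuming w.l.o.g.\ that $v_0 = v_m$ is not an inner vertex of any $P \in \mathcal{P}$.
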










\begin{proof}
    We iteratively construct the tour $C$,
    by following the closed walk $W'' = (v_0, \dots, v_m)$ on $V$. Note that without 
    loss of generality $v_0 = v_m$ is not an inner vertex of any path $P \in \mathcal{P}$.
    We follow a strategy similar to the classic shortcutting applied in the double tree and Christofides' approximation algorithms.
    This means that for all vertices that are not contained in  any path in $\mathcal{P}$,
    we add them to $C$ the first time the walk $W''$ visits them. Otherwise, they are shortcut, i.e. not added to $C$.
    To ensure that no edges of 
    paths in $\mathcal{P}$ are shortcut, whenever the closed walk $W''$ visits a vertex $v$
    of any path $P \in \mathcal{P}$ and $W''$ is currently not in the first full transversal of $P$, we shortcut 
    this detour to $v$. This ensures that at the point when the closed walk 
    $W''$ traverses the path $P$ for the first time, it holds that none of the vertices in $P$ 
    are yet visited in the current subtour $C$, hence the whole path $P$ is traversed by 
    $C$. At the end of the process, we close the constructed path $C$ by adding $v_0$.

    Any vertex in $V$ appears in $C$, since we only shortcut 
    a vertex $v$ if it is already previously visited by the current subtour, 
    or if it occurs in the closed walk $W''$ before its later occurrence as part of the first transversal 
    of a path $P \in \mathcal{P}$ with $v \in P$. Also, every vertex $v\in V$ appears in $C$ exactly once,
    since $v$ also appears in $W''$ and we shortcut every time $v$ is revisited by $W''$. Hence, $C$ is a tour on $V$.
    Finally, note that any edge $\{v,w\} \in E(C)$ corresponds 
    to an edge-distinct subwalk $P_{v,w}$ of $W''$, connecting $v$ to $w$. 
    Hence, by the triangle inequality we have
    $d(C) \leq d(W'')$.
    \qed
\end{proof}

The application of \cref{shortcut:paths} in \cref{algo:tsp:approx}
transforms the closed walks $W''_i$ into tours $C_i$ for $i=1,2$ such that
it holds that
    $d_1(C_1) + d_2(C_2) \leq d_1(W''_1) + d_2(W''_2)$
    and
    $|E(C_1) \cap E(C_2)| \geq |\bigcup_{j=1}^r E(P_j)| \geq q$.
Using the results of the preceding sections we are now ready to derive 
an approximation guarantee for \cref{algo:tsp:approx}. 

\begin{proof}[Proof of \cref{thm:tsp:approx}]
By construction, we know that both $C_1, C_2$ are tours and 
$|E(C_1) \cap E(C_2)| \geq |\bigcup_{j=1}^r E(P_j)| = |T'_1 \cap T'_2| = |T_1 \cap T_2| \geq q$,
hence the tours $C_1, C_2$ are a feasible solution to \probl{RecovTSP}.
It also holds that
\begin{align*}
     d_1(C_1) + d_2(C_2) & \leq d_1(W''_1) + d_2(W''_2) \\
     &\leq 2 (d_1(T'_1) + d_2(T'_2) )
     \leq 4(d_1(T_1) + d_2(T_2))  
     \leq 4 \opt.
\end{align*}
Note that all steps of \cref{algo:tsp:approx} can be implemented in polynomial time.
\qed
\end{proof}

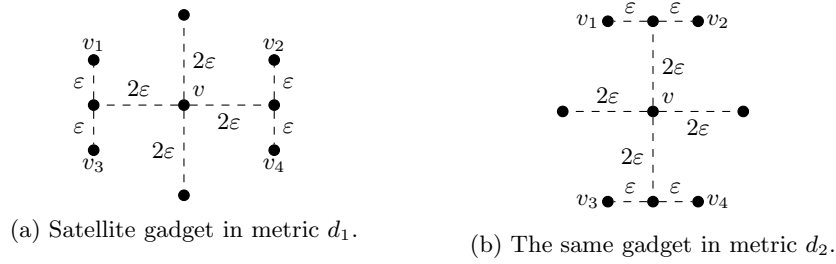
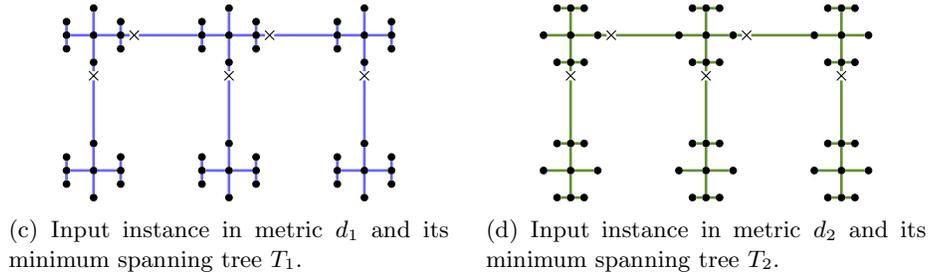
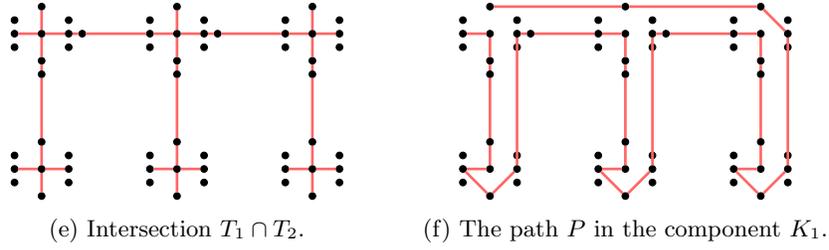
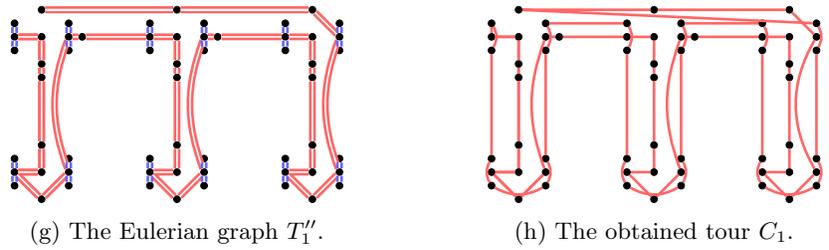
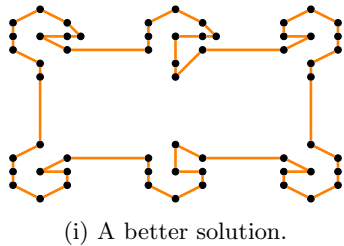
\begin{figure}[tbhp]
\begin{subfigure}{0.48\textwidth}
\centering
\begin{tikzpicture}[scale=0.6, auto,swap]
    \foreach \pos/\name in {{(0,0)/v}, {(2,0)/vE}, {(2,1)/vNE}, {(0,2)/vN}, {(-2,1)/vNW},{(-2,0)/vW}, {(-2,-1)/vSW}, {(0,-2)/vS}, {(2,-1)/vSE}}{
        \node[vertex] (\name) at \pos {};  
    }
     \node at (v) [above right] {$v$};
     \node at (vNE) [above] {$v_{2}$};
     \node at (vNW) [above] {$v_{1}$};
     \node at (vSE) [below] {$v_{4}$};
     \node at (vSW) [below] {$v_{3}$};
     \draw[edge, dashed] (v) to node {$2\varepsilon$} (vE);
     \draw[edge, dashed] (v) to node {$2\varepsilon$} (vN);
     \draw[edge, dashed] (v) to node {$2\varepsilon$} (vS);
     \draw[edge, dashed] (v) to node {$2\varepsilon$} (vW);
     \draw[edge, dashed] (vW) to node[left] {$\varepsilon$} (vNW);
     \draw[edge, dashed] (vE) to node {$\varepsilon$} (vNE);
     \draw[edge, dashed] (vE) to node[right] {$\varepsilon$} (vSE);
     \draw[edge, dashed] (vW) to node {$\varepsilon$} (vSW);
\end{tikzpicture}
 \subcaption{Satellite gadget in metric $d_1$.}
 \label{fig:tight:a}
\end{subfigure}\hfill
\begin{subfigure}{0.48\textwidth}
\centering
\begin{tikzpicture}[scale=0.6, auto,swap]
    \foreach \pos/\name in {{(0,0)/v}, {(2,0)/vE}, {(1,2)/vNE}, {(0,2)/vN}, {(-1,2)/vNW},{(-2,0)/vW}, {(-1,-2)/vSW}, {(0,-2)/vS}, {(1,-2)/vSE}}{
        \node[vertex] (\name) at \pos {};  
    }
     \node at (v) [above right] {$v$};
     \node at (vNE) [right] {$v_{2}$};
     \node at (vNW) [left] {$v_{1}$};
     \node at (vSE) [right] {$v_{4}$};
     \node at (vSW) [left] {$v_{3}$};
     \draw[edge, dashed] (v) to node {$2\varepsilon$} (vE);
     \draw[edge, dashed] (v) to node {$2\varepsilon$} (vN);
     \draw[edge, dashed] (v) to node {$2\varepsilon$} (vS);
     \draw[edge, dashed] (v) to node {$2\varepsilon$} (vW);
     \draw[edge, dashed] (vN) to node {$\varepsilon$} (vNW);
     \draw[edge, dashed] (vN) to node[above] {$\varepsilon$} (vNE);
     \draw[edge, dashed] (vS) to node[above] {$\varepsilon$} (vSE);
     \draw[edge, dashed] (vS) to node {$\varepsilon$} (vSW);
\end{tikzpicture}
 \subcaption{The same gadget in metric $d_2$.}
 \label{fig:tight:b}
\end{subfigure} \hfill
\par\bigskip
\tikzstyle{smallVertex}=[draw,circle,fill=black, minimum size=2.5pt,inner sep=0pt]
\tikzstyle{T1edge'} = [draw,blue!60,line width=1,-]
\tikzstyle{T2edge'} = [draw,OliveGreen!80!yellow,line width=1,-]
\tikzstyle{IntersectionEdge'} = [draw,red!60,line width=1,-]
\tikzset{cross/.style={cross out, draw=black, minimum size=3.5pt, inner sep=0pt, outer sep=0pt},
cross/.default={1pt}}
\begin{subfigure}{0.48\textwidth}
\centering
\begin{tikzpicture}[scale=0.18, auto,swap]
    \foreach \offsetX/\offsetY/\offsetName in {0/0/C, 2/0/E, 2/1/NE, 0/2/N, -2/1/NW, -2/0/W, -2/-1/SW, 0/-2/S, 2/-1/SE}{
    	\foreach \posX/\posY/\name in {0/0/1, 10/0/2, 20/0/3, 0/-10/4, 10/-10/5, 20/-10/6} {
    		\node[smallVertex] (\name\offsetName) at ($(\posX, \posY)+(\offsetX, \offsetY)$) {}; 
    	}         
    }
    \foreach \i/\posX/\posY in {1/0/-3, 2/3/0, 3/10/-3, 4/13/0, 5/20/-3} {
    	\node[cross] (x\i) at (\posX, \posY) {};
    }
    \draw[T1edge'] (1W) -- (1C) -- (1E) -- (x2) -- (2W) -- (2C) -- (2E) -- (x4) -- (3W) -- (3C)-- (3E);
    \draw[T1edge'] (4S) -- (4C) -- (4N) -- (x1) -- (1S) -- (1C) -- (1N);
    \draw[T1edge'] (5S) -- (5C) -- (5N) -- (x3) -- (2S) -- (2C) -- (2N);
    \draw[T1edge'] (6S) -- (6C) -- (6N) -- (x5) -- (3S) -- (3C) -- (3N);
    \draw[T1edge'] (4W) -- (4C) -- (4E);
    \draw[T1edge'] (5W) -- (5C) -- (5E);
    \draw[T1edge'] (6W) -- (6C) -- (6E);
    \foreach \i in {1,2,...,6} {
    	\draw[T1edge'] (\i NW) -- (\i W) -- (\i SW);
    	\draw[T1edge'] (\i NE) -- (\i E) -- (\i SE);
    }

\end{tikzpicture}
 \subcaption{Input instance in metric $d_1$ and its minimum spanning tree $T_1$.}
 \label{fig:tight:c}
\end{subfigure} \hfill
\begin{subfigure}{0.48\textwidth}
\centering
\begin{tikzpicture}[scale=0.18, auto,swap]
    \foreach \offsetX/\offsetY/\offsetName in {0/0/C, 2/0/E, 1/2/NE, 0/2/N, -1/2/NW, -2/0/W, -1/-2/SW, 0/-2/S, 1/-2/SE}{
    	\foreach \posX/\posY/\name in {0/0/1, 10/0/2, 20/0/3, 0/-10/4, 10/-10/5, 20/-10/6} {
    		\node[smallVertex] (\name\offsetName) at ($(\posX, \posY)+(\offsetX, \offsetY)$) {}; 
    	}         
    }
	\foreach \i/\posX/\posY in {1/0/-3, 2/3/0, 3/10/-3, 4/13/0, 5/20/-3} {
    	\node[cross] (x\i) at (\posX, \posY) {};
    }
    \draw[T2edge'] (1W) -- (1C) -- (1E) -- (x2) -- (2W) -- (2C) -- (2E) -- (x4) -- (3W) -- (3C)-- (3E);
    \draw[T2edge'] (4S) -- (4C) -- (4N) -- (x1)-- (1S) -- (1C) -- (1N);
    \draw[T2edge'] (5S) -- (5C) -- (5N) -- (x3) -- (2S) -- (2C) -- (2N);
    \draw[T2edge'] (6S) -- (6C) -- (6N) -- (x5) -- (3S) -- (3C) -- (3N);
    \draw[T2edge'] (4W) -- (4C) -- (4E);
    \draw[T2edge'] (5W) -- (5C) -- (5E);
    \draw[T2edge'] (6W) -- (6C) -- (6E);
    \foreach \i in {1,2,...,6} {
    	\draw[T2edge'] (\i NW) -- (\i N) -- (\i NE);
    	\draw[T2edge'] (\i SW) -- (\i S) -- (\i SE);
    }

\end{tikzpicture}
 \subcaption{Input instance in metric $d_2$ and its minimum spanning tree $T_2$.}
 \label{fig:tight:d}
\end{subfigure}
\par\bigskip
\begin{subfigure}{0.48\textwidth}
\centering
\begin{tikzpicture}[scale=0.18, auto,swap]
    \foreach \offsetX/\offsetY/\offsetName in {0/0/C, 2/0/E, 2/1/NE, 0/2/N, -2/1/NW, -2/0/W, -2/-1/SW, 0/-2/S, 2/-1/SE}{
    	\foreach \posX/\posY/\name in {0/0/1, 10/0/2, 20/0/3, 0/-10/4, 10/-10/5, 20/-10/6} {
    		\node[smallVertex] (\name\offsetName) at ($(\posX, \posY)+(\offsetX, \offsetY)$) {}; 
    	}         
    }
    \foreach \i/\posX/\posY in {1/0/-3, 2/3/0, 3/10/-3, 4/13/0, 5/20/-3} {
    	\node[smallVertex] (x\i) at (\posX, \posY) {};
    }
    \draw[IntersectionEdge'] (1W) -- (1C) -- (1E) -- (x2) -- (2W) -- (2C) -- (2E) -- (x4) -- (3W) -- (3C)-- (3E);
    \draw[IntersectionEdge'] (4S) -- (4C) -- (4N) -- (x1) -- (1S) -- (1C) -- (1N);
    \draw[IntersectionEdge'] (5S) -- (5C) -- (5N) -- (x3) -- (2S) -- (2C) -- (2N);
    \draw[IntersectionEdge'] (6S) -- (6C) -- (6N) -- (x5) -- (3S) -- (3C) -- (3N);
    \draw[IntersectionEdge'] (4W) -- (4C) -- (4E);
    \draw[IntersectionEdge'] (5W) -- (5C) -- (5E);
    \draw[IntersectionEdge'] (6W) -- (6C) -- (6E);

\end{tikzpicture}
 \subcaption{Intersection $T_1 \cap T_2$.}
 \label{fig:tight:e}
 \end{subfigure}
 \begin{subfigure}{0.48\textwidth}
\centering
\begin{tikzpicture}[scale=0.18, auto,swap]
    \foreach \offsetX/\offsetY/\offsetName in {0/0/C, 2/0/E, 2/1/NE, 0/2/N, -2/1/NW, -2/0/W, -2/-1/SW, 0/-2/S, 2/-1/SE}{
    	\foreach \posX/\posY/\name in {0/0/1, 10/0/2, 20/0/3, 0/-10/4, 10/-10/5, 20/-10/6} {
    		\node[smallVertex] (\name\offsetName) at ($(\posX, \posY)+(\offsetX, \offsetY)$) {}; 
    	}         
    }
    \foreach \i/\posX/\posY in {1/0/-3, 2/3/0, 3/10/-3, 4/13/0, 5/20/-3} {
    	\node[smallVertex] (x\i) at (\posX, \posY) {};
    }
    \draw[IntersectionEdge'] (1W) -- (1C) -- (1S) -- (x1) -- (4N) -- (4C) -- (4W) -- (4S) -- (4E) -- (1E) -- (x2) -- (2W) -- (2C) -- (2S) -- (x3) -- (5N) -- (5C) -- (5W) -- (5S) --(5E)-- (2E) -- (x4) -- (3W) -- (3C) -- (3S) -- (x5) -- (6N) -- (6C) -- (6W) -- (6S) --(6E) -- (3E) -- (3N) -- (2N) -- (1N);

\end{tikzpicture}
 \subcaption{The path $P$ in the component $K_1$.}
 \label{fig:tight:f}
\end{subfigure}
\par\bigskip
 \begin{subfigure}{0.48\textwidth}
\centering
\begin{tikzpicture}[scale=0.18, auto,swap]
    \foreach \offsetX/\offsetY/\offsetName in {0/0/C, 2/0/E, 2/1/NE, 0/2/N, -2/1/NW, -2/0/W, -2/-1/SW, 0/-2/S, 2/-1/SE}{
    	\foreach \posX/\posY/\name in {0/0/1, 10/0/2, 20/0/3, 0/-10/4, 10/-10/5, 20/-10/6} {
    		\node[smallVertex] (\name\offsetName) at ($(\posX, \posY)+(\offsetX, \offsetY)$) {}; 
    	}         
    }
    \foreach \i/\posX/\posY in {1/0/-3, 2/3/0, 3/10/-3, 4/13/0, 5/20/-3} {
    	\node[smallVertex] (x\i) at (\posX, \posY) {};
    }
    \foreach \offestX/\offsetY in {0/0, 0.5/0.5}
    \draw[IntersectionEdge', double] (1W) -- (1C) -- (1S) -- (x1) -- (4N) -- (4C) -- (4W) -- (4S) -- (4E);
    \draw[IntersectionEdge', double, bend left = 20] (4E) to (1E);
    \draw[IntersectionEdge', double] (1E) -- (x2) -- (2W) -- (2C) -- (2S) -- (x3) -- (5N) -- (5C) -- (5W) -- (5S) --(5E);
    \draw[IntersectionEdge', double, bend left = 20] (5E) to (2E); 
    \draw[IntersectionEdge', double] (2E) -- (x4) -- (3W) -- (3C) -- (3S) -- (x5) -- (6N) -- (6C) -- (6W) -- (6S) -- (6E);
    \draw[IntersectionEdge', double, bend left = 20] (6E)  to (3E);
    \draw[IntersectionEdge', double] (3E) -- (3N) -- (2N) -- (1N) -- (2N) -- (3N);
    \foreach \i in {1,2,...,6} {
    	\draw[T1edge', double] (\i NW) -- (\i W) -- (\i SW);
    	\draw[T1edge', double] (\i NE) -- (\i E) -- (\i SE);
    }
\end{tikzpicture}
 \subcaption{The Eulerian graph $T_1''$.}
 \label{fig:tight:g}
 \end{subfigure}\hfill
 \begin{subfigure}{0.48\textwidth}
\centering
\begin{tikzpicture}[scale=0.18, auto,swap]
    \foreach \offsetX/\offsetY/\offsetName in {0/0/C, 2/0/E, 2/1/NE, 0/2/N, -2/1/NW, -2/0/W, -2/-1/SW, 0/-2/S, 2/-1/SE}{
    	\foreach \posX/\posY/\name in {0/0/1, 10/0/2, 20/0/3, 0/-10/4, 10/-10/5, 20/-10/6} {
    		\node[smallVertex] (\name\offsetName) at ($(\posX, \posY)+(\offsetX, \offsetY)$) {}; 
    	}         
    }
    \foreach \i/\posX/\posY in {1/0/-3, 2/3/0, 3/10/-3, 4/13/0, 5/20/-3} {
    	\node[smallVertex] (x\i) at (\posX, \posY) {};
    }
    \foreach \offestX/\offsetY in {0/0, 0.5/0.5}
    \draw[IntersectionEdge'] (1W) -- (1C) -- (1S) -- (x1) -- (4N) -- (4C) -- (4W) -- (4S) -- (4E);
    \draw[IntersectionEdge', bend left = 20] (4E) to (1E);
    \draw[IntersectionEdge'] (1E) -- (x2) -- (2W) -- (2C) -- (2S) -- (x3) -- (5N) -- (5C) -- (5W) -- (5S) --(5E);
    \draw[IntersectionEdge', bend left = 20] (5E) to (2E); 
    \draw[IntersectionEdge'] (2E) -- (x4) -- (3W) -- (3C) -- (3S) -- (x5) -- (6N) -- (6C) -- (6W) -- (6S) -- (6E);
    \draw[IntersectionEdge', bend left = 30] (6E)  to (3E);
    \draw[IntersectionEdge'] (3E) -- (3N) -- (2N) -- (1N) --(3NE);
    \draw[IntersectionEdge', bend left = 30] (3NE)  to (3SE);
    \draw[IntersectionEdge'] (3SE) -- (6NE);
    \draw[IntersectionEdge', bend left = 30] (6NE)  to (6SE);
    \draw[IntersectionEdge', bend left = 30] (6SE)  to (6SW);
    \draw[IntersectionEdge', bend left = 30] (6SW)  to (6NW);
    \draw[IntersectionEdge'] (6NW) -- (3SW);
    \draw[IntersectionEdge', bend right = 30] (3SW)  to (3NW);
    \draw[IntersectionEdge'] (3NW) -- (2NE);
    \draw[IntersectionEdge', bend left = 30] (2NE)  to (2SE);
    \draw[IntersectionEdge'] (2SE) -- (5NE);
    \draw[IntersectionEdge', bend left = 30] (5NE)  to (5SE);
    \draw[IntersectionEdge', bend left = 30] (5SE)  to (5SW);
    \draw[IntersectionEdge', bend left = 30] (5SW)  to (5NW);
    \draw[IntersectionEdge'] (5NW) -- (2SW);
    \draw[IntersectionEdge', bend right = 30] (2SW)  to (2NW);
    \draw[IntersectionEdge'] (2NW) -- (1NE);
    \draw[IntersectionEdge', bend left = 30] (1NE)  to (1SE);
    \draw[IntersectionEdge'] (1SE) -- (4NE);
    \draw[IntersectionEdge', bend left = 30] (4NE)  to (4SE);
    \draw[IntersectionEdge', bend left = 30] (4SE)  to (4SW);
    \draw[IntersectionEdge', bend left = 30] (4SW)  to (4NW);
    \draw[IntersectionEdge'] (4NW) -- (1SW);
    \draw[IntersectionEdge', bend right = 30] (1SW)  to (1NW);
    \draw[IntersectionEdge'] (1NW) -- (1W);
\end{tikzpicture}
 \subcaption{The obtained tour $C_1$.}
 \label{fig:tight:h}
 \end{subfigure}
\par\bigskip
 \centering
 \begin{subfigure}{0.48\textwidth}
\centering
\begin{tikzpicture}[scale=0.18, auto,swap]
    \foreach \offsetX/\offsetY/\offsetName in {0/0/C, 2/0/E, 2/1/NE, 0/2/N, -2/1/NW, -2/0/W, -2/-1/SW, 0/-2/S, 2/-1/SE}{
    	\foreach \posX/\posY/\name in {0/0/1, 10/0/2, 20/0/3, 0/-10/4, 10/-10/5, 20/-10/6} {
    		\node[smallVertex] (\name\offsetName) at ($(\posX, \posY)+(\offsetX, \offsetY)$) {}; 
    	}         
    }
    \foreach \i/\posX/\posY in {1/0/-3, 2/3/0, 3/10/-3, 4/13/0, 5/20/-3} {
    	\node[smallVertex] (x\i) at (\posX, \posY) {};
    }
    \draw[IntersectionEdge',orange] (1SE) -- (1C) -- (1E)-- (x2) -- (1NE) -- (1N) -- (1NW) -- (1W) -- (1SW) -- (1S) -- (x1) -- (4N) -- (4NW) -- (4W) -- (4SW) -- (4S) -- (4SE) -- (4E) -- (4C) -- (4NE) -- (5NW) -- (5W) --(5SW) -- (5S) -- (5SE) -- (5E) -- (5C) -- (5N) -- (5NE) -- (6NW) -- (6C) -- (6W) --(6SW) -- (6S) -- (6SE) -- (6E) -- (6NE) -- (6N) -- (x5) -- (3S) --(3SE) -- (3E) -- (3NE) -- (3N) -- (3NW) -- (3W) -- (3C) -- (3SW) -- (2SE) -- (x3) -- (2S) -- (2C) -- (2E) -- (x4) -- (2NE) -- (2N) -- (2NW) -- (2W) -- (2SW) -- (1SE);

\end{tikzpicture}
 \subcaption{A better solution.}
 \label{fig:tight:i}
\end{subfigure}
\caption{An example instance where the 4-approximation is tight.}
\label{fig:tight}
\end{figure}

\subsection{An Example where the 4-Approximation is Tight}

We have successfully shown that Algorithm~\ref{algo:tsp:approx} provides a 4-approximation to the \probl{RecovTSP}. 
Inspired by ideas for the tightness of the double-tree algorithm~\cite{johnson1985performance}, we show:
\begin{lemma}
There exist problem instances such that Algorithm~\ref{algo:tsp:approx} can return a solution which is (asymptotically) 4 times worse than the optimal solution, even if both $d_1, d_2$ are 2-dimensional Euclidean metrics.
\end{lemma}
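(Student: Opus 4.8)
The plan is to exhibit an explicit family of instances, parametrized by a scale $\varepsilon>0$ and a number $N$ of copies, on which the optimum stays within a $(1+o(1))$ factor of the recoverable spanning tree bound while a valid run of \cref{algo:tsp:approx} is driven close to four times this value. I would base the construction on the \emph{satellite gadget} of \cref{fig:tight:a,fig:tight:b}: a centre with four satellites at distance $2\varepsilon$ and four leaves at distance $\varepsilon$, realized through two separate planar embeddings --- one for $d_1$, one for $d_2$ --- that coincide on the \emph{backbone} (centres, satellites, and the connector ``cross'' vertices) but attach each gadget's leaves to the horizontal arms in $d_1$ and to the vertical arms in $d_2$. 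Tiling $N$ gadgets into a grid and wiring the centres through the connector vertices (\cref{fig:tight:c,fig:tight:d}) produces genuine $2$-dimensional Euclidean metrics. The first thing to verify is the structure of the spanning trees: $T_1,T_2$ are the minimum spanning trees, $T_1\cap T_2$ is exactly the backbone and is connected (\cref{fig:tight:e}), so $q:=|T_1\cap T_2|$ is well defined and every leaf edge is private to a single tree.

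Next I would pin down the optimum from above. The snaking pair of tours suggested by \cref{fig:tight:i} visits every vertex of each gadget with only lower-order detours and shares a common backbone path in both metrics (where $d_1=d_2$), hence is feasible for \probl{RecovTSP} with at least $q$ common edges and has cost $(1+o(1))(d_1(T_1)+d_2(T_2))$; combined with \cref{lemma:recovst:tsp} this gives $\opt=(1+o(1))(d_1(T_1)+d_2(T_2))$. The harder direction is a \emph{lower} bound on the algorithm's output, which cannot come from the inequality chain proving \cref{thm:tsp:approx} (that only bounds the cost from above). Instead I would fix the nondeterministic choices of \cref{algo:tsp:approx} adversarially --- the Eulerian order defining the double-tree path $P_1$ of $K_1$, the circuit $W_i''$ of \cref{lemma:euler:tour:paths}, and the shortcutting order of \cref{shortcut:paths} --- and then estimate $d_i(C_i)$ directly from the geometry of the resulting tours (\cref{fig:tight:f,fig:tight:g,fig:tight:h}).

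Two independent sources of loss must be made to survive simultaneously. The Hamiltonization of the backbone $K_1$ can be forced, following the tightness construction for the double-tree heuristic~\cite{johnson1985performance}, to produce a path $P_1$ of weight $(2-o(1))(d_1+d_2)(K_1)$, the connector vertices being placed collinearly so that the shortcuts of the adversarial Euler order realize straight-line distances and save nothing. The subsequent doubling of $T_i'$ must then fail to be recovered by \cref{shortcut:paths}: the private leaves are placed collinearly on their arms so that the out-and-back detours forced on $C_i$, together with the connections between successive leaf excursions in the adversarial circuit, meet the triangle inequality with equality and cannot be shortened. Quantifying both effects and summing over the two metrics yields $d_1(C_1)+d_2(C_2)=(4-o(1))\opt$.

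The main obstacle is precisely that the two factors of $2$ must \emph{compound} rather than collapse. Since a tour contains $P_1$ only once, \cref{shortcut:paths} would shortcut away a second traversal of the backbone, so a naive instance degrades to a factor $2$; the second factor survives only if the construction balances the backbone weight against the private leaf weight so that the leaves carry enough cost to matter, are served by the algorithm through wasteful detours, yet are threaded for free by the optimum of \cref{fig:tight:i}. Choosing the collinear coordinates so that the double-tree shortcuts on $K_1$ and the leaf detours in $C_i$ are simultaneously tight, while keeping $\opt$ at the spanning tree bound, is the crux; the rest is routine coordinate bookkeeping and control of the $o(1)$ terms as $\varepsilon\to0$ and $N\to\infty$. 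Because the algorithm is nondeterministic, it suffices to realize a single adversarial run, which frees me to fix every tie-break in the worst possible way.
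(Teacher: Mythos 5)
Your proposal follows essentially the same route as the paper's own proof: the identical satellite-gadget construction tiled on a $2\times k$ grid with helper (connector) vertices forcing unique minimum spanning trees, an adversarially fixed run of the algorithm in which the double-tree Hamiltonization of the single intersection component loses a factor $2$ and the satellite visits, deferred until after the preserved path $P$, force a second full traversal of the grid (the second factor $2$), all measured against the snaking optimal tour of \cref{fig:tight:i} whose cost matches the recoverable spanning tree bound up to $(1+o(1))$. The plan is correct as stated (one cosmetic note: in the paper's instance the satellite edges themselves have negligible weight $O(k\varepsilon)$; as you also indicate, it is the connections between successive satellite excursions, not the leaf edges, that carry the second factor), and what remains is exactly the routine cost accounting $d_i(C_i)=(1+o(1))(8k-4)$ versus $\opt=(1+o(1))4k$ carried out in the paper.
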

\begin{proof}
Let $k \geq 2$ be a fixed integer. We describe a problem instance $(V, d_1, d_2, q)$ of \probl{RecovTSP}. For each vertex $v \in V$, we assign a position $p^1_v \in \R^2$ and a possibly different position $p^2_v \in \R^2$ to it. We define $d_i(x, y) := \| p^i_x - p^i_y \|_2$ for $i=1,2$. Clearly $d_1$ and $d_2$ are Euclidean metrics. Let $0 < \varepsilon < 1/k^2$ be some small quantity.  A so-called \emph{satellite gadget} is depicted in \cref{fig:tight:a}. 
It is a gadget around some central vertex $v$ such that $v$ is surrounded by eight additional vertices. The position of these eight vertices in relation to the position of $v$ is exactly like depicted in \cref{fig:tight:a}. Here, the dashed lines between vertices symbolize their respective horizontal or vertical distance. The vertices $v_1, \dots, v_4$ are called \emph{satellites}. 
For every satellite $w$ we have $p^1_w \neq p^2_w$ and the position $p^2_w$ is like specified in \cref{fig:tight:b}. For every vertex $w$ which is not a satellite, we have $p^1_w = p^2_w$. 
Now the actual problem instance of \probl{RecovTSP} is created by considering the $2 \times k$ regular unit grid on the grid points $(i,j)_{i=1,2; j=1,\dots,k}$. \cref{fig:tight} depicts the case where $k = 3$. 
On each grid point, we place a copy of the satellite gadget. Additionally, we introduce $2k-1$ additional so called \emph{helper-vertices} at distance $3\varepsilon$ from the grid points. The instance is depicted in \cref{fig:tight:c,fig:tight:d}. Helper vertices are marked with a cross. Finally, we let $q := 12k - 1$. This completes our description of the \probl{RecovTSP} instance $(V, d_1, d_2, q)$. The following  observations can now be readily made:
\begin{itemize}
\item The unique minimum spanning tree $T_1$ ($T_2$) with respect to $d_1$ ($d_2$) is depicted in \cref{fig:tight:c} (\cref{fig:tight:d}). (The role of the helper vertices is to make the minimum spanning tree unique.)
\item The intersection $T_1 \cap T_2$ is depicted in \cref{fig:tight:e}. We have $|T_1 \cap T_2| = q$ and $(T_1, T_2)$ is the optimal solution to the \probl{RecovST}.
\item The path $P$ depicted in \cref{fig:tight:f} is a possible outcome when running line 5 of Algorithm~\ref{algo:tsp:approx}.
\item \cref{fig:tight:g} depicts the corresponding Eulerian graph $T''_1$.
\item The shortcutting procedure can run in such a way that for metric $d_1$ the tour $C_1$ depicted in \cref{fig:tight:h} is the output of Algorithm~\ref{algo:tsp:approx}. We have $d_1(C_1) = (1 + o(1))(8k - 4)$.
\item Analogously, the shortcutting procedure can run in such a way that for metric $d_2$ some tour $C_2$ is output such that $d_2(C_2) = (1 + o(1))(8k - 4)$.
\item On the other hand, consider the tour $C$ depicted in \cref{fig:tight:i}. We might have $d_1(C) \neq d_2(C)$, but still we have $d_1(C) = (1 + o(1))2k$ and $d_2(C) = (1 + o(1))2k$. If we let $C'_1 := C'_2 := C$, then $|E(C'_1) \cap E(C'_2)| = |V| \geq q$. So $(C'_1, C'_2)$ is a solution to the \probl{RecovTSP} of value $(1 + o(1))4k$. This is asymptotically a factor 4 better than $d_1(C_1) + d_2(C_2)$. This proves the lemma. \qed
\end{itemize}
\end{proof}

\subsection{Pitfalls when Applying Christofides' Algorithm}

We give two short remarks which show that some trivial ideas to modify Algorithm~\ref{algo:tsp:approx} do not work. Hence there are likely new ideas needed to obtain an approximation guarantee better than 4.

\begin{remark}\label{rem:boundtight}
    We remark that the bound provided in \cref{commonshortcut} is tight. We give an example where this is the case: Assume that the intersection  $T_1 \cap T_2$ is a star on the vertex set $U$ with $n+1$ vertices such that some vertex $u_0$ is the center of the star. Assume furthermore that for all vertices $x, y$ in $U$ we have $d_1(x, y) = d_2(x,y)$ and that the metric  used on $U$ is the Paris railway metric: Here we have $d_i(x, y) = 0$ if $x = y$, otherwise $d_i(x, y) = 1$ if $x = u_0$ or $y = u_0$, and otherwise $d_i(x, y) = 2$ for $i=1,2$. Furthermore assume that the intersection $T_1 \cap T_2$ makes up almost all of the cost of $T_1$ and $T_2$, that is $d_1(T_1) + d_2(T_2) = d_1(T_1 \cap T_2) + d_2(T_1 \cap T_2) + \varepsilon$ for some small $\varepsilon > 0$. Then $d_1(T_1) + d_2(T_2) = 2n + \varepsilon$. On the other hand, every Hamilton path in $U$ has cost at least $4n - 4$. Therefore we will have $d_1(T_1') + d_2(T_2') \geq 4n - 4$, independent of which path $P_j$ will be picked in order to replace $T_1 \cap T_2$. This example shows that even though there are better approximation algorithms known than the double-tree heuristic, using these algorithms instead of the double-tree heuristic in line 5 of Algorithm~\ref{algo:tsp:approx} does not yield a better approximation guarantee than a factor of 2 for \cref{commonshortcut}.
\end{remark}

\begin{remark}
Because the graph $(V, T''_i)$ is the double-tree of the graph $(V, T'_i)$, an approximation factor of 2 is introduced. 
One could also ask whether one can apply Christofides' algorithm to obtain some Eulerian graph $T'''_i$ from $T'_i$ plus a matching, 
and therefore only introduce a factor of $3/2$. However, this idea does not work: If one analogously transforms $T'''_i$ into  $\tilde{T_i}$, 
then one can see that even though all vertices in $\tilde{T_i}$ have even degree, one can find examples where $\tilde{T_i}$ is not connected. 
In general, one can show that there exist instances such that the cost of a minimal tour which includes all the paths $P_1,\dots, P_r$ as subpaths 
is strictly larger than the cost of $T_i$ plus a matching. This shows that Christofides' algorithm cannot trivially be applied in order to improve 
our approximation guarantee.
\end{remark}

\section{A 2-Approximation for Constant Intersection Size}\label{sec:multistage}

We now consider the setting where the required size $q$ of the intersection set is a constant number. We show that there exists a 2-approximation algorithm that can be applied to the more general \probl{$k$-St-RecovTSP}, where $k$ tours $C_1,\ldots,C_k$ with intersection size $q$ need to be constructed.

The corresponding \probl{$k$-Stage RecovST} is NP-hard according to \cite{lendl2021matroid}. Hence, it is not possible to use the same approach as in \cref{sec:recovtspapprox} to obtain a constant factor approximation algorithm for the \probl{$K$-St-RecovTSP}. 

\begin{theorem}\label{thm:tsp:approx:constantq}
    For constant $q$ and arbitrary $k$, there exists a $2$-approximation algorithm for \probl{Metric $k$-St-RecovTSP}.
\end{theorem}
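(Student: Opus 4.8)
The plan is to exploit that for constant $q$ the common intersection of any feasible solution is a small object that can be discovered by exhaustive enumeration. First I would observe that for any tours $C_1,\dots,C_k$ the set $S := \bigcap_{j} E(C_j)$ is a subset of the edge set of a single tour, and hence, as long as $q<n$, a \emph{linear forest}, i.e.\ a vertex-disjoint union of simple paths. (The degenerate case $q=n$ forces all tours to coincide and reduces to classical metric \probl{TSP} under the metric $\sum_j d_j$, which admits a $2$-approximation by the double-tree algorithm; I would dispatch it separately.) It therefore suffices to enumerate every edge set $S\subseteq\binom{V}{2}$ with $|S|=q$ that forms a linear forest. There are only $\binom{\binom{n}{2}}{q}=O(n^{2q})$ such candidates, which is polynomial because $q$ is constant.

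For a fixed candidate $S$ the $k$ tours decouple: since the only coupling constraint is that all of them contain $S$, it is enough to compute, for each stage $j$ independently, a cheap tour $C_j$ with $S\subseteq E(C_j)$ with respect to $d_j$. The key subroutine is thus a $2$-approximation for the problem of finding a minimum-cost tour containing a prescribed linear forest. I would solve this exactly as in \cref{algo:tsp:approx}: compute a minimum spanning tree $T_j$ of $(V,d_j)$ subject to containing all edges of $S$ (obtained by contracting the paths of $S$ and running a standard MST algorithm), double it, extract an Eulerian tour in which each path of $S$ occurs as a subpath via \cref{lemma:euler:tour:paths}, and shortcut it without destroying these subpaths via \cref{shortcut:paths}. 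This yields a tour $C_j\supseteq S$ with $d_j(C_j)\le d_j(T_j+T_j)\le 2\,d_j(T_j)$.

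The lower bound tying this back to the optimum is the argument of \cref{lemma:recovst:tsp}: if $C_j^\star$ is a minimum-cost tour containing $S$, then since $q<n$ we may delete from $C_j^\star$ an edge not in $S$, obtaining a Hamiltonian path (hence a spanning tree) that still contains $S$; therefore the constrained MST satisfies $d_j(T_j)\le d_j(C_j^\star)$, and so $d_j(C_j)\le 2\,d_j(C_j^\star)$. Summing over $j$, the solution produced for candidate $S$ costs at most $2\sum_{j}\min\{\,d_j(C):\ C\text{ a tour},\ S\subseteq E(C)\,\}$.

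Finally I would argue that one enumerated candidate is good enough. Let $C_1^{\opt},\dots,C_k^{\opt}$ be an optimal solution with intersection $S^{\opt}=\bigcap_j E(C_j^{\opt})$, and let $S$ be any $q$-edge subset of $S^{\opt}$ (a subset of a linear forest is again a linear forest, so $S$ is among the candidates). Since each $C_j^{\opt}$ is a tour containing $S$, the per-stage optimum for $S$ is at most $d_j(C_j^{\opt})$, so the algorithm's output for this $S$ costs at most $2\sum_j d_j(C_j^{\opt})=2\,\opt$. Returning the cheapest solution over all candidates yields the claimed $2$-approximation, and every step runs in time $O(n^{2q}\cdot\mathrm{poly}(n,k))$. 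The main obstacle is the subroutine that forces a prescribed linear forest into a tour while losing only a factor of $2$; fortunately \cref{lemma:euler:tour:paths,shortcut:paths} apply verbatim, as they were established for an arbitrary spanning tree containing vertex-disjoint paths, so the remaining work is simply to verify the constrained-MST lower bound and the enumeration bound.
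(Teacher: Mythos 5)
Your proposal is correct and follows essentially the same route as the paper's own proof: guess the $q$-edge intersection as a collection of vertex-disjoint paths by exhaustive enumeration, extend it in each stage to a spanning tree containing those paths via a (constrained) MST computation, then reuse the doubling, the Eulerian-tour construction of \cref{lemma:euler:tour:paths}, the path-preserving shortcutting of \cref{shortcut:paths}, and the \cref{lemma:recovst:tsp}-style lower bound obtained by deleting one non-forced edge from each optimal tour. You in fact supply details the paper's terse proof leaves implicit — the enumeration count $O(n^{2q})$ over edge subsets (the paper writes $\binom{n}{q}$), the per-stage decoupling once $S$ is fixed, and the degenerate case $q=n$ — but the underlying argument is identical.
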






    

\begin{proof}
    This result is obtained by guessing the optimal intersection 
    of the $k$ tours by checking all $\binom{n}{q}$ possibilities for subsets of pairwise vertex-disjoint paths.
    These sets of pairwise vertex-disjoint paths can then be extended to spanning trees 
    $T'_1, \dots, T'_k$ by solving $k$ instances of the minimum spanning tree problem.
    Then, the algorithm can proceed in a similar way as \cref{algo:tsp:approx},
    with the only difference that instead of performing each operation twice we now have 
    to perform them $k$ times.
    By similar arguments as in \cref{lemma:recovst:tsp} it holds that 
    $\sum_{i=1}^k d_i(T'_k) \leq \opt$. 
    Using this, we can proceed as in \cref{algo:tsp:approx} line~9 
    to obtain the tours $C_1, \dots, C_k$, for which the cost can be bounded by
        $\sum_{i=1}^k d_i(C_i) \leq \sum_{i=1}^k d_i(W''_i) 
       \leq 2 \sum_{i=1}^k d_i(T'_i) 
       \leq 2 \opt$.
    \qed
\end{proof}

\section{Implications for Recoverable Robust Optimization}
\label{sec:robust}

We now discuss the implications of our approximation results for recoverable robust optimization problems. Formally, let $\X\subseteq\{0,1\}^E$ denote the set of feasible solutions for some combinatorial optimization problem over ground set $E$, let $\cU \subseteq\mathbb{R}^E$ denote a set of cost scenarios, and let $\X^k(x) = \{ y\in\X : d(x,y) \le k\}$ denote the set of second-stage recovery solutions for some given first-stage solution $x$, where $d$ denotes some measure of distance. The recoverable robust problem is to solve
$\min_{x\in\X} \max_{c\in\cU} \min_{y\in\X^k(x)} \sum_{e\in E} C_e x_e + c_e y_e$,
see, e.g., the definition given in \cite{kasperski2016robust}.
If $\cU$ is the Cartesian product of intervals, i.e., $\cU = \{ c \in \mathbb{R}^E : c_e \in [\ell_e,u_e]\ \forall e \in E\}$, then an optimal solution to the inner maximization problem is to choose all cost coefficients to be at their upper bound $u_e$. This means that the recoverable robust problem considers only a single scenario, which is equivalent to the recoverable problem setting considered in this paper. Therefore, our approximation results hold for the \probl{Recoverable Robust TSP} with interval uncertainty.

Other uncertainty sets are considered as well, including budgeted uncertainty (see, e.g., \cite{chassein2016recoverable}, where budgeted uncertainty sets were used for the TSP). Budgeted uncertainty sets are essentially interval sets with an additional constraint on the total amount of deviation. Different variants have been proposed in the literature. In \cite{hradovich2017recoverable}, the following sets were used:
\begin{align*}
\cU^\Gamma_1 & =  \{ c \in \mathbb{R}^E : c_e = \ell_e + (u_e - \ell_e)\delta_e,\ \delta_e\in\{0,1\}\ \forall e \in E,\ \sum_{e\in E} \delta_e \le \Gamma\}, \\
\cU^\Gamma_2 & =  \{ c \in \mathbb{R}^E : c_e = \ell_e + \delta_e,\ \delta_e\in[0,u_e-\ell_e]\ \forall e \in E,\ \sum_{e\in E} \delta_e \le \Gamma\}.
\end{align*}
They showed the following result in the context of the \probl{RecovST},
 which also holds for any other combinatorial optimization problem. If $\alpha\in(0,1]$ is such that $\ell_e \ge \alpha u_e$ for all $e\in E$, then an optimal solution to the recoverable problem with respect to costs $\ell_e$ is an $1/\alpha$ approximation for the recoverable robust problem with respect to $\cU^\Gamma_1$ or $\cU^\Gamma_2$. By a straightforward modification of their proof of \cite[Lemma~6]{hradovich2017recoverable}, one can derive a $4/\alpha$-approximation algorithm for the recoverable robust TSP with budgeted uncertainty if $q$ is part of the input (and $2/\alpha$ if $q$ is constant) using our results.

\section{Conclusions}

Recoverable combinatorial optimization problems are a natural generalization of classic problems that arise in the area of robust optimization. In this paper, we considered the \probl{Recoverable Traveling Salesman Problem}, where two tours with respect to two distance functions need to be constructed, minimizing the sum of distances, such that the size of their intersection is at least a prescribed number $q$. Building upon the classic double-tree approximation idea, we showed that it is possible to transform an optimal solution 
of the \probl{RecovST}, which can be solved in polynomial time, into a feasible solution for \probl{RecovTSP} with an objective value that is at most 4 times the optimum. 
We provided an example that shows that the analysis of this algorithm is tight, and gave an intuition why it is not possible to apply Christofides' algorithm while using the same algorithmic ideas. Furthermore, we considered the case that $q$ is a constant, which allows for a stronger and easier 2-approximation algorithm, which can also be applied if more than two tours need to be constructed.

In further research, stronger approximation results are likely to exist. More specialized cases in the distance structure also seem fruitful to consider, such as the planar Euclidean case, or distance matrices with the Monge and anti-Monge property. 
Finally, it would be of interest to study the approximability of \probl{Metric Recoverable Assignment} or \probl{Matching Problems}.

\paragraph{\textbf{Funding.}} The authors acknowledge partial support by the Field of Excellence ``COLIBRI'' at the University of Graz, Deutsche Forschungsgemeinschaft (DFG) through grant GO 2069/1-1 and the Austrian Science Fund (FWF): W1230.

\bibliographystyle{splncs04} 
\bibliography{references}

\begin{thebibliography}{10}
\providecommand{\url}[1]{\texttt{#1}}
\providecommand{\urlprefix}{URL }
\providecommand{\doi}[1]{https://doi.org/#1}

\bibitem{bold2021recoverable}
Bold, M., Goerigk, M.: Recoverable robust single machine scheduling with
  interval uncertainty. arXiv preprint arXiv:2107.09310  (2021)

\bibitem{busing2011phd}
B{\"u}sing, C.: Recoverable robustness in combinatorial optimization. Cuvillier
  Verlag (2011)

\bibitem{busing2012recoverable}
B{\"u}sing, C.: Recoverable robust shortest path problems. Networks
  \textbf{59}(1),  181--189 (2012)

\bibitem{chassein2016recoverable}
Chassein, A., Goerigk, M.: On the recoverable robust traveling salesman
  problem. Optimization Letters  \textbf{10}(7),  1479--1492 (2016)

\bibitem{christofides1976worst}
Christofides, N.: Worst-case analysis of a new heuristic for the travelling
  salesman problem. Tech. rep., Carnegie-Mellon Univ Pittsburgh Pa Management
  Sciences Research Group (1976)

\bibitem{fischer2020investigation}
Fischer, D., Hartmann, T.A., Lendl, S., Woeginger, G.J.: An investigation of
  the recoverable robust assignment problem. arXiv preprint arXiv:2010.11456
  (2020)

\bibitem{hradovich2017recoverable}
Hradovich, M., Kasperski, A., Zieli{\'n}ski, P.: Recoverable robust spanning
  tree problem under interval uncertainty representations. Journal of
  Combinatorial Optimization  \textbf{34}(2),  554--573 (2017)

\bibitem{iwamasa2021optimal}
Iwamasa, Y., Takazawa, K.: Optimal matroid bases with intersection constraints:
  Valuated matroids, m-convex functions, and their applications. Mathematical
  Programming pp. 1--28 (2021)

\bibitem{johnson1985performance}
Johnson, D., Papadimitriou, C.: Performance guarantees for heuristics. In:
  Lawler, E., Lenstra, J., Kan, A.R., Shmoys, D. (eds.) The Traveling Salesman
  Problem: a guided tour of Combinatorial Optimization. Wiley, Chichester
  (1985)

\bibitem{karlin2021slightly}
Karlin, A.R., Klein, N., Gharan, S.O.: A (slightly) improved approximation
  algorithm for metric {TSP}. In: Proceedings of the 53rd Annual ACM SIGACT
  Symposium on Theory of Computing. pp. 32--45 (2021)

\bibitem{kasperski2016robust}
Kasperski, A., Zieli{\'n}ski, P.: Robust discrete optimization under discrete
  and interval uncertainty: A survey. In: Robustness analysis in decision
  aiding, optimization, and analytics, pp. 113--143. Springer (2016)

\bibitem{kasperski2017robust}
Kasperski, A., Zieli{\'n}ski, P.: Robust recoverable and two-stage selection
  problems. Discrete Applied Mathematics  \textbf{233},  52--64 (2017)

\bibitem{lachmann2021linear}
Lachmann, T., Lendl, S., Woeginger, G.J.: A linear time algorithm for the
  robust recoverable selection problem. Discrete Applied Mathematics
  \textbf{303},  94--107 (2021)

\bibitem{lendl2021matroid}
Lendl, S., Peis, B., Timmermans, V.: Matroid bases with cardinality constraints
  on the intersection. Mathematical Programming pp. 1--24 (2021)

\bibitem{liebchen2009concept}
Liebchen, C., L{\"u}bbecke, M., M{\"o}hring, R., Stiller, S.: The concept of
  recoverable robustness, linear programming recovery, and railway
  applications. In: Robust and online large-scale optimization, pp. 1--27.
  Springer (2009)

\bibitem{rosenkrantz1977analysis}
Rosenkrantz, D.J., Stearns, R.E., Lewis, II, P.M.: An analysis of several
  heuristics for the traveling salesman problem. SIAM journal on computing
  \textbf{6}(3),  563--581 (1977)

\end{thebibliography}

\end{document}